\def\C{
\mathsf C}
\newcommand{\add@empty@sup}{\@ifnextchar^{}{^{}}}
\newlist{Properties}{enumerate}{2}
\setlist[Properties]{label=Property \arabic*.,itemindent=*}
\DeclarePairedDelimiter{\ceil}{\lceil}{\rceil}
\definecolor{myred}{RGB}{234, 107, 102}
\definecolor{mypur}{RGB}{166, 128, 184}
\definecolor{Blue1}{RGB}{240,248,255} 
\definecolor{Blue2}{RGB}{70,130,180}
\newcommand{\eg}{{\it e.g.}\xspace}
\newcommand{\etal}{{\it et~al.}\xspace}
\newcommand{\ie}{{\it i.e.}\xspace}
\newcommand{\SIM}{\mathsf{Sim}}
\newtheorem{property}{Property}
\newtheorem{theorem}{Theorem}
\newtheorem{lemma}[theorem]{Lemma}
\newtheorem{definition}{Definition}
\newcommand{\figbox}[3]
{
\begin{figure}[t]

\fbox{
\centering
\begin{minipage}{0.46\textwidth}
  \begin{center}
  \underline{\centering #1 }
\end{center}

  {#2}

\end{minipage}
}

\caption{#3}
\end{figure}
}
\newcommand{\Sim}{\mathtt{Sim}}
\newcommand{\Adv}{\mathcal{A}}
\newcommand{\PA}{\ensuremath{{\sf P}_{A}}}
\newcommand{\PB}{\ensuremath{{\sf P}_{B}}}
\newcommand{\ID}{{\sf ID}}
\newcommand{\AGen}{\mathcal{HE}.{\sf Gen}}
\newcommand{\AEnc}{\mathcal{HE}.{\sf Enc}_{\pk}}
\newcommand{\ADec}{\mathcal{HE}.{\sf Dec}_{\sk}}
\newcommand{\ASum}{\mathcal{HE}.{\sf Sum}}
\newcommand{\AFresh}{\mathcal{HE}.{\sf Refresh}}
\newcommand{\VIEW}{{\sf View}}
\newcommand{\keyA}[1]{k^{#1}_A}
\newcommand{\keyB}[1]{k^{#1}_B}
\newcommand{\id}{{\sf id}}
\newcommand{\Mone}{\mathcal{M}_{\tt one}^{\tau}}
\newcommand{\Mwf}{\mathcal{M}_{\tt mult}}
\newcommand{\F}[1]{\mathcal{F}_{\textit{#1}}}
\newcommand{\Prot}[1]{\Pi_{\textit{#1}}}
\newcommand{\usample}{\xleftarrow{\tt{\$}}}
\newcommand{\sid}{{\sf sid}}
\newcommand{\PRF}{{durPRF}}
\newcommand{\cPSI}{{CPSI}}
\newcommand\algorithmicprocedure{\textbf{procedure}}
\newcommand{\algorithmicendprocedure}{\algorithmicend\ \algorithmicprocedure}
\newcommand\PROCEDURE[3][default]{%
  \ALC@it
  \algorithmicprocedure\ \textsc{#2}(#3)%
  \ALC@com{#1}%
  \begin{ALC@prc}%
}
\newcommand\ENDPROCEDURE{%
  \end{ALC@prc}%
  \ifthenelse{\boolean{ALC@noend}}{}{%
    \ALC@it\algorithmicendprocedure
  }%
}
\newenvironment{ALC@prc}{\begin{ALC@g}}{\end{ALC@g}}
\NewDocumentEnvironment{gameInteract}{m+b}
{%
  \par\begin{Spacing}{1.2}
    \interactGame[linenumbering]{#1}{#2}
    \par\addvspace{\baselineskip}
  \end{Spacing}
}{}
\newcommand{\Comb}[2]{{}^{#1}\C_{#2}}
\begin{document}

\title{\bf PrivacyGo: Privacy-Preserving Ad Measurement with Multidimensional Intersection}

\author{{Jian Du, Haohao Qian, Shikun Zhang, Wen-jie Lu,} Donghang Lu, \\Yongchuan Niu, Bo Jiang, Yongjun Zhao, and Qiang Yan}
\affiliation{%
  \institution{TikTok Inc.}
  \country{}
}
\thanks{*Corresponding author: Jian Du. Email: {jian.du@tiktok.com}.}

\renewcommand{\shortauthors}{}

\begin{abstract}
In digital advertising, accurate measurement is essential for optimizing ad performance, requiring collaboration between advertisers and publishers to compute aggregate statistics—such as total conversions—while preserving user privacy. Traditional secure two-party computation methods allow joint computation on single-identifier data without revealing raw inputs, but they fall short when multidimensional matching is needed and leak the intersection size, exposing sensitive information to privacy attacks.

This paper tackles the challenging and practical problem of multi-identifier private user profile matching for privacy-preserving ad measurement, a cornerstone of modern advertising analytics. We introduce a comprehensive cryptographic framework leveraging reversed Oblivious Pseudorandom Functions (OPRF) and novel blind key rotation techniques to support secure matching across multiple identifiers. Our design prevents cross-identifier linkages and includes a differentially private mechanism to obfuscate intersection sizes, mitigating risks such as membership inference attacks.

We present a concrete construction of our protocol that achieves both strong privacy guarantees and high efficiency. It scales to large datasets, offering a practical and scalable solution for privacy-centric applications like secure ad conversion tracking. By combining rigorous cryptographic principles with differential privacy, our work addresses a critical need in the advertising industry, setting a new standard for privacy-preserving ad measurement frameworks.

\end{abstract}

\maketitle

\section{Introduction}

\noindent{\bf Private Ad in Practice:}
Digital advertising relies on cross-site and cross-app user tracking to link ad impressions stored by ad providers with user conversions recorded by advertisers, using third-party cookies or mobile device identifiers. This connection provides insights into ad performance and audience reach, enabling measurement of key metrics such as click-through rates, conversions, and return on ad spend (ROAS). By analyzing user behavior across platforms, marketers can further refine campaign strategies.

However, growing privacy concerns over tracking consumer behavior across ad providers and advertisers---driven by regulations such as GDPR and CCPA, along with industry changes such as third-party cookie deprecation and mobile identifier restrictions---are transforming the digital advertising landscape. In response, privacy-preserving technologies are becoming essential to balance effective ad measurement with robust data protection.

Ideally, if a unique identifier is used consistently across both the ad provider and the advertiser sides, cryptographic techniques that enable privacy-preserving set intersection can compute shared statistics without revealing sensitive data from either party, effectively preventing cross-platform tracking.
However, in practice, matching user identities across platforms is complex because users often have different identifiers, such as varying registration emails or phone numbers. To improve accuracy, multiple identifiers must be aligned based on specific rules. Techniques like the waterfall matching approach (also known as advanced matching~\cite{advanced_matching}), widely used in the advertising industry, prioritize identifier hierarchy to reduce false positives and enhance reliability. This method systematically resolves identity discrepancies by sequentially matching identifiers from the most reliable to the least reliable, ensuring more precise attribution. The underlying mathematics extends beyond the simple set intersection and is known as \textit{multidimensional intersection}, where multiple attributes are matched simultaneously to achieve context-aware identity resolution with higher accuracy. 

\subsection{Problem Characteristics and Requirements}

In Figure~\ref{fig:waterfall_match}, we provide a formal description of the functionality that we aim to compute privately.
This figure illustrates the process by which one party, \PB{}, learns the aggregated sum $c$ according to the logic of the waterfall matching\footnote{ 
Indeed, the waterfall matching can work for more than two identifiers. To simplify the presentation, we only consider two identifiers here.} for a multidimensional intersection. 
At first glance, the target problem illustrated in Figure~\ref{fig:waterfall_match} may appear similar to 
the private-match-for-compute protocols such as~\cite{google-psi,buddhavarapu2020private, ion2020deploying, eurocrypt/Pinkas0TY19,popets/ChandranGS22, garimella2021private}. 
However, we highlight two key differences that distinguish our target problem from theirs:

\noindent \textbf{1) Multidimensional Intersection.} 
For the case of a single ID, two databases are joined using exact matching on the common identifier, which is handled by most existing private set intersection protocols. 
However, when it comes to multiple IDs, it often results in many-to-many connections. 
For instance, some records in the databases may be matched by one identifier, e.g., email address, while other records are matched by a different identifier, e.g., phone number. 
The waterfall matching converts the many-to-many connections to one-to-one connections in a \textit{hierarchically iterative way}.
Specifically, the waterfall matching approach ranks identifiers in a predefined order.
A record is considered matched if one of its identifiers matches. 
To avoid many-to-many connections, the matched rows are removed before the matching procedure advances to the next identifier. Waterfall matching for multidimensional intersections is effective because:
\begin{enumerate}
    \item Each record is matched only once, preventing many-to-many connections.
    \item Higher-priority identifiers take precedence, reducing false positives and enhancing matching reliability.
\end{enumerate}

\noindent \textbf{2) Hiding Cross-ID Leakages.}
Leveraging multiple identifiers improves match rates, which is critical for the advertising business. However, a simple approach that runs a single-ID matching protocol separately for each identifier can unintentionally reveal additional information.

Figure~\ref{fig:waterfall_match} defines two types of cross-ID leakages: {\bf type-E} and {\bf type-X} (see Figure~\ref{fig:toy_cross_id_match} for an example). Type-E leakages occur when parties learn that a record is matched by multiple IDs, while type-X leakages occur when one party (say, \PA) learns that a record from \PB{} matches multiple records on \PA{}'s side.
For example, in Figure~\ref{fig:toy_cross_id_match}, the fourth row of the left table matches both the fourth and fifth rows of the right table. Type-X leakages are common in ad applications, where users may have multiple accounts on a publisher's platform registered with different identifiers, such as phone numbers and email addresses. 

Both type-E and type-X leakages increase the risk of membership leakage, as demonstrated by membership inference attacks described in the literature~\cite{Guo2022birds, Bo}. From a business perspective, these leakages allow either party to track user behaviour on the other side. For instance, the ad provider could learn if a user has converted on the advertiser's side, which could be particularly sensitive for industries like healthcare and finance. Addressing these leakages is crucial to ensure privacy while enabling effective matching.

\figbox{Functionality $\F{wfm}$}{
\small
On receiving identifiers $\ID^{1}_A, \ID^{2}_A$ from \PA{}, $\ID^{1}_B, \ID^{2}_B$ and payloads $T$ from \PB{}, this functionality do:
  \begin{enumerate}
    \item Match on the 1st identifier 
      \begin{align*}
      {J}^1_A = \{i: {\ID}^{1}_{A}[i] \in {\ID}^{1}_B\}, \quad {J}^1_B = \{j: {\ID}^1_B[j] \in {\ID}^{1}_A\}
      \end{align*}
    \item Unmatched records on the 2nd identifier
      \begin{align*}
        {\sf rID}^2_A = \{{\ID}^2_{A}[i]\}_{i \notin J^1_A}, \quad {\sf rID}^2_B = \{{\ID}^2_{B}[j]\}_{j \notin J^1_B}
      \end{align*}
    \item Match on the 2nd identifier
      \begin{align*}
      &{J}^2_A = \left\{i: {\sf ID}^{2}_{A}[i] \in {\sf rID}^2_B \right\} \setminus J^1_A\\
      &{J}^2_B = \left\{j: {\sf ID}^{2}_{B}[j] \in {\sf rID}^2_A \right\} \setminus J^2_B
      \end{align*}
    \item Give \PA{} the sizes $s_1 = |J^1_A|$ and $s_2 = |J^2_B|$.
    \item Give \PB{} the sum $c = \sum_{i\in {J}^1_B\cup {J}^2_B} T[i]$ and $s_1, s_2$.
  \end{enumerate}
}{
The waterfall matching and sum functionality\label{fig:waterfall_match}.
Note that the matches on the 2nd ID column can also be given as 
$\{i: {\sf rID}^2_A[i] \in {\sf rID}^2_B \}$ and $\{j: {\sf rID}^2_B[j] \in {\sf rID}^2_A \}$, which needs an extra
mapping to a global index in the payload column.
}

\begin{figure}[t]
\centering
\begin{tikzpicture}
    \definecolor{lightblue}{RGB}{225, 237, 255}
    \definecolor{lightyellow}{RGB}{255, 249, 227}
    \definecolor{darkorange}{RGB}{237, 125, 49}
    \definecolor{darkgreen}{RGB}{0, 128, 0}
     \renewcommand{\arraystretch}{1.3}

    \node (tableA) at (0,0) {
        \begin{tabular}{|>{\columncolor{lightblue}}c|>{\columncolor{lightblue}}c|}
            \hline
            $\ID_{A}^1$ & $\ID_A^2$ \\ \hline
            d & ! \\ \hline
            a & \# \\ \hline
            f & \& \\ \hline
            g & @ \\ \hline
        \end{tabular}
    };

    \node (tableB) at (5, 0) {
        \begin{tabular}{|>{\columncolor{lightyellow}}c|>{\columncolor{lightyellow}}c|}
            \hline
            $\ID_B^1$ & $\ID_B^2$ \\ \hline
            b & \^{} \\ \hline
            a & \# \\ \hline
            c & \& \\ \hline
            g & NA \\ \hline
            NA & @ \\ \hline
        \end{tabular}
    };

    \draw [<->, thick, darkorange] ([yshift=0.0ex]tableA.east) -- ([yshift=1.5ex]tableB.west) node[midway, above] {type-E};
    \draw [<-, thick, darkgreen, dashed] ([yshift=-7.0ex]tableA.east) -- ([yshift=-5ex]tableB.west) node[midway, above, sloped] {type-X};
    \draw [<-, thick, darkgreen, dashed] ([yshift=-7.0ex]tableA.east) -- ([yshift=-8ex]tableB.west);
\end{tikzpicture}

\caption{Example of the type-E cross-ID leakage and the type-X cross-ID leakage. `NA' indicates a missing value.\label{fig:toy_cross_id_match}}
\end{figure}
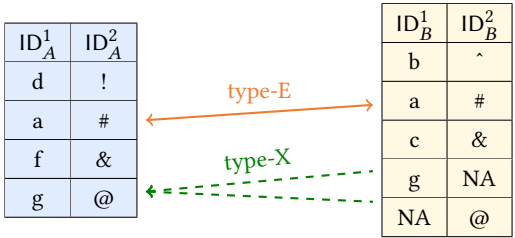
\section{Technical Overview} 

\subsection{Distributed Evaluation of Reversed and Blindly Updatable Pseudorandom Function}
We first recap the recipe for building a private set intersection protocol.
Suppose \PA{} inputs set $X$ and \PB{} inputs set $Y$.
They invoke an Oblivious Pseudorandom Function (OPRF) protocol, which 
gives the PRF evaluations $\{F_k(x) \mid x \in X\}$ to \PA{} and gives the key $k$ to \PB{}.
This directly allows obtaining the intersection $X \cap Y$ by having \PB{} send $\{F_k(y) \mid y \in Y\}$ to \PA{}.
Since the interest is in evaluating a function $f$ over the intersection, rather than the intersection itself,
it is natural to perform the OPRF in a reversed manner. 
For example, \PB{} obtains $\{F_k(x) \mid x \in X\},$ the PRF evaluations on \PA{}'s input. 
By performing the reversed OPRF twice and exchanging the PRFs,
\PA{} and \PB{} can obtain the set cardinality $f(X \cap Y) = |X \cap Y|$ without revealing the intersection.
This {\it reversed OPRF} implicitly demands a {\it distributed OPRF} evaluation, i.e., the corresponding PRF key is secretly shared between the two parties. 


We demonstrate that the reversed OPRF already enables efficient row deletion required by the waterfall matching logic. Specifically, \PA{} can locally remove matched rows from ${F_k(y) \mid y \in Y}$ before exchanging PRF evaluations with \PB{}. Since the key remains unknown to both parties, \PB{} is oblivious to which rows have been deleted. This private row deletion inherently mitigates type-E leakages.

To address type-X leakages, we introduce a novel mechanism for blind PRF key updates. Given an evaluation $F_k(x)$, \PB{} interacts with \PA{} to obtain $F_{k'}(x)$—an evaluation under a new shared key—without revealing the linkage between $F_k(x)$ and $x$. Importantly, this ensures that even if \PA{} knows the input $x$, it remains unaware of the key transition, thus preserving input privacy across key updates. Our method represents a significant advancement in leakage resilience by seamlessly integrating key blinding into private matching protocols.


Finally, we formalize the desired OPRF functionality in Figure~\ref{fig:recipe}, defining it as a {\it Distributed, Blindly Updatable, and Reversed Pseudorandom Function ($\PRF$)}. This new $\F{\PRF}$ construction enables a privacy-preserving multidimensional intersection for implementing a waterfall matching protocol with robust leakage resistance.
Our blind key update mechanism addresses type-X leakages by fully decoupling key transitions from input knowledge. This ensures that even if an input is partially known to \PA{}, the linkage between evaluations under different keys remains hidden, preserving privacy across updates. Additionally, the reversed OPRF design allows \PA{} to efficiently delete matched rows from ${F_k(y) \mid y \in Y}$ without exposing deletion patterns to \PB{}, inherently mitigating type-E leakages.
By integrating these capabilities, our protocol delivers privacy guarantees against adversaries seeking to exploit linkage or key-correlated information, providing a comprehensive defense against both type-E and type-X leakages. 

\figbox{Functionality $\F{\PRF}$}{
\small
There are two parties \PA{} and \PB{}. 

{Initialization:} Upon receiving ${\bf Init}.({\sf sid}, k_A)$ from \PA{}, and  
${\bf Init}.(\sid, k_B)$ from \PB{}, 
initialize an empty table $T_{\sid}$, 
and record $(\sid, k_A \oplus k_B)$.
Ignore all subsequent ${\sf init}$ queries for the session ID $\sid$.

{Evaluation:} On ${\bf Send}.(\sid, x_1, \cdots, x_n)$ from $\PA$ (sender), and ${\bf Recv}.(\sid, n)$ from \PB{} (receiver) do:
\begin{itemize}
  \item Set $T_{\sid}[x_i] \usample \{0, 1\}^\lambda$ if any empty entry.
  \item Give $T_{\sid}[x_1], \cdots, T_{\sid}[x_n]$ to \PB{}.
\end{itemize}
The roles can be flipped, \ie, \PB{} acts as the sender and \PA{} acts as the receiver.

{Update:} Upon receiving ${\bf Update}.({\sid}, {\sid}', z_1, \cdots, z_m)$ from \PB{},
and ${\bf Update}.(\sid, \sid', m)$ from \PA{} do:
\begin{itemize}
  \item Set $z'_j \usample \{0, 1\}^\lambda$ for $j = 1,2,\cdots, m$.
  \item For each $z_j$, find an entry $x_j$ in $T_{\sid}$ such that $T_{\sid}[x_j] = z_j$. 
        If not any then set $x_j = \bot$.
  \item If $x_j\neq \bot$ and the entry $T_{\sid'}[x_j]$ is not empty, 
        then overwrite $z'_j \leftarrow T_{\sid'}[x_j]$.
        Otherwise if $x_j\neq \bot$ and the entry $T_{\sid'}[x_j]$ is empty, 
        then set the entry $T_{\sid'}[x_j] \leftarrow z'_j$.
  \item Finally return $z'_1, \cdots, z'_m$ to \PB{}.
\end{itemize}
The roles can be flipped too, \ie, \PB{} acts as the assistant and \PA{} acts as the updater.
}{The functionality of $\PRF$. \label{fig:recipe}}

\subsection{Enhancing Privacy via Differential Privacy}
Recent research highlights that size-revealing (single-ID) private set intersection (PSI) protocols are susceptible to membership inference attacks, posing significant privacy risks in advertising contexts~\cite{Guo2022birds, Bo}. In these attacks, an ad publisher can infer which users on its platform viewed an ad but did not complete a purchase with the advertiser. This compromises sensitive business data, as the advertiser may not wish to disclose granular user conversion behavior. Such leakages not only undermine the advertiser’s competitive edge but also raise potential compliance concerns under privacy regulations such as the GDPR, which mandates strict limits on personal data processing and requires minimizing data exposure between parties. Addressing these vulnerabilities is critical to ensuring privacy-preserving advertising practices that align with regulatory obligations and protect proprietary business insights.

Multidimensional intersections are more vulnerable to membership inference attacks. Particularly, we conduct the membership inference attacks~\cite{Guo2022birds, Bo} against the size-revealing and 
multi-ID protocol~\cite{multi_key_meta}. 
The results show that the success rates doubled (\eg, from 2.5\% to 6.6\%) when the same attacks were carried out over a dataset with 4 IDs. See Figure~\ref{fig:single_vs_mult_id_with_DP} in \S\ref{sec:experiments}.


To mitigate intersection size leakages, we adopt a strategy based on differentially private (DP) mechanisms. Specifically, we introduce a novel {\it dual-sided DP} mechanism that enables both parties to independently sample and inject dummy identifiers into their datasets, $\ID_A$ and $\ID_B$, creating two augmented sets, $\widetilde{\ID}_A$ and $\widetilde{\ID}_B$, respectively.

Input privacy is preserved by the private waterfall matching protocol, while the intersection size privacy is guaranteed by ensuring that the distribution of $|\widetilde{\ID}_A \cap \widetilde{\ID}_B|$ is differentially private with respect to the true size $|\ID_A \cap \ID_B|$. This mechanism obfuscates the intersection size, making it indistinguishable within a carefully controlled privacy budget.

The number of dummy identifiers added depends on the size of the original data set and the desired privacy parameters. By leveraging our $\PRF$ construction, which enables independent processing of dimensions, we achieve a tighter DP analysis. This design allows us to apply parallel composition across multiple dimensions, significantly reducing the exponential overhead of dummy identifiers required under a sequential composition approach. Consequently, our method scales more efficiently while maintaining rigorous privacy guarantees, offering a substantial improvement in the practicality of privacy-preserving multi-dimensional set operations.

\noindent{\textbf{On DP to Multi-Dimension PSI}.}
A naive approach of applying DP to PSI protocols dimension-by-dimension may appear feasible at first glance. 
However, while DP mechanisms can theoretically mitigate privacy risks posed by type-E and type-X leakages, this approach incurs exponentially increasing overheads. 
This stems from the fact that the intersection size can vary by up to the full dataset size, as demonstrated by the scenario where all records of one party could potentially match different identifiers under type-E/X leakage conditions. 
In contrast, our waterfall matching protocol \textbf{eliminates} these leakage vectors entirely, enabling a significantly reduced overhead when integrated with DP mechanisms.

\noindent{\bf On the Input Validations.}
The intersection sizes are effectively randomized by our {\it dual-sided DP} mechanism, provided both parties correctly sample and add dummies to their input sets. However, an adversary could bypass this protection and reveal the exact intersection size by simply skipping the sampling step.
To prevent this ``simple attack,'' we employ secure hardware, such as Intel SGX, for code attestation. Each party uses its own secure hardware to verify that the (encrypted) messages they send are generated by the intended code, including the correct dummy sampling. Since the dummy set is common information, it can be hardcoded within the attested code. Notably, confidentiality of the hardware is unnecessary in this setting, as it is managed and executed by the same party.
\subsection{Our Contributions}
Our contributions are summarized as follows:

\begin{itemize}
\item \textbf{A general framework for private waterfall matching without cross-ID leakage.}

We propose a novel framework to achieve the waterfall matching functionality while eliminating cross-ID leakages. Central to our approach is a functionality we define as the \textit{Distributed, Updatable, and Reversed Pseudorandom Function} ($\F{\PRF}$), described in Figure~\ref{fig:recipe}. We present two constructions for $\F{\PRF}$ using hashed Diffie-Hellman (DH) and Yao’s Garbled Circuit~\cite{yao1986generate}.

Preventing cross-ID leakage is critical because it allows the use of lighter, more efficient differential privacy (DP) mechanisms to enhance privacy. Applying DP to protocols with cross-ID leakage would result in significantly higher overhead, as privacy loss grows exponentially with the number of IDs. Our solution effectively decouples privacy loss from cross-ID matches, leading to more scalable and practical implementations.
\item \textbf{A tightly bounded DP mechanism for protecting intersection sizes.}
We design a differentially private mechanism where both parties independently sample and add dummy identifiers to their input sets. This protects the intersection size with minimal overhead.

We present two analyses to reduce dummy size, improving efficiency. The first uses $\F{\PRF}$ to decompose cross-ID leakage, allowing parallel composition with linear dummy growth relative to the number of identifiers, compared to the exponential growth in sequential composition. The second uses convolution techniques from~\cite{koskela2020tight} to minimize dummies across multiple waterfall matching executions, such as aggregating conversion data from two ad campaigns.

\item \textbf{Two variants and an optimized implementation.}
We extend our protocol to support both \PA{} and \PB{} providing payloads for aggregation and enable general functions beyond summation by sharing matched homomorphic encryption (HE) ciphertexts for secret sharing.

We implement the protocol using DH-based $\PRF$ and a lattice-based HE scheme, with HE optimizations that handle two million records across three ID columns in about 2 minutes (8 threads) over a 100Mbps connection. Our DP-enhanced protocol achieves practical performance, and we plan to release our implementation publicly.
\end{itemize}
\subsection{Related Works}
In Private Set Intersection (PSI) two parties, \PA{} with a set $X$ and $\PB{}$ with a set $Y$ , securely compute the intersection $X \cap Y$, without leaking
the information of the items that are not in the intersection has been studied extensively in a long sequence of works~\cite{chen2017fast,pinkas2014faster,kolesnikov2016efficient,freedman2004efficient,google-psi,PS3I,multi_key_meta,eurocrypt/Pinkas0TY19,popets/ChandranGS22,eurocrypt/Pinkas0TY19,cans/KarakocK20, pkc/0005DP21,crypto/GhoshS19,access/ZhangCL21, ion2020deploying}. 

In many settings, the goal is to compute some function $f$ over the intersection set, i.e., $f(X \cap Y)$, rather than knowing the intersection itself~\cite{DBLP:conf/scn/CiampiO18, google-psi, DBLP:conf/cans/CristofaroGT12,DBLP:conf/isw/DebnathD15}.
PSI-cardinality is one example of such an the two parties are limited to learning only the cardinality (or size) of the intersection~\cite{DBLP:conf/cans/CristofaroGT12,DBLP:conf/isw/DebnathD15}.
The private intersection-sum functionality introduced by Ion \etal~\cite{google-psi} is another example where one of the input sets has integer values associated with the elements in the set  and the two parties aggregate of the integer values associated with the intersection set.
The circuit-based PSI protocols such as~\cite{eurocrypt/Pinkas0TY19,popets/ChandranGS22,DBLP:conf/ndss/HuangEK12} enable 
the computation of arbitrary symmetric functions securely over the intersection. 
To the best of our knowledge, most of the existing PSI protocols are designed for the single identifier.

Private-ID~\cite{multi_key_meta} is the only publicly available solution for multi-ID matching that comes with membership leakage.
More precisely, the participants in \cite{multi_key_meta} can know the intersection size of any subset of IDs, leading to an exponential leakage of membership information.


Kacsmar \etal~\cite{eurosp/KacsmarKLNSSBSO20} employ DP mechanisms for the cardinality function $f(X\cap Y) = |X \cap Y|$  in a client-server scenario in where
only the client can know the result.  Consequently, only central DP is necessary for their setting.
However, in our context, where each party is inquisitive about membership and there is a potential for membership attacks, the application of central DP is not suitable. Instead, a distributed DP approach is required.

\noindent{\bf Oblivious PRF.} In the literature, the oblivious evaluation of a PRF is commonly defined as a two-party protocol where \PA{} provides a key $k$ and the other party \PB{} provides an input $x$. At the end of the protocol execution, \PB{} obtains $F_k(x)$. There are many constructions for oblivious PRF protocols, such as~\cite{DBLP:conf/focs/NaorR97,DBLP:conf/pkc/DodisY05,DBLP:conf/ccs/KolesnikovKRT16}.
Moreover, we can modify an OPRF protocol to a key-distributed and output-reversed counterpart, i.e., the key is distributed between the two parties and \PA{} obtains the evaluation, using homomorphic encryption as described by~\cite{google-psi}. The double-hash approach from~\cite{DBLP:conf/cans/CristofaroGT12} is a reversed OPRF but does not provide key rotation capability.

Many OPRF protocols based on the Dodis-Yampolskiy (DY) PRF can support key rotation (i.e., updating the PRF key). However, it seems to require nontrivial efforts to support blind key rotation when the DY-PRF key is distributed between two parties. We refer to the survey by Casacuberta \etal~\cite{DBLP:conf/eurosp/CasacubertaHL22} for more details on recent oblivious PRF constructions.

\noindent{\bf Circuit-based Solutions.}
One can also leverage the technique called circuit-based PSI (\cPSI), such as~\cite{eurocrypt/Pinkas0TY19,popets/ChandranGS22,DBLP:conf/ndss/HuangEK12},
which can eliminate the leakage of the intersection sizes.
However, the modification of the current single-ID \cPSI{} protocols to the multi-ID setting involves non-trivial work. 
A \cPSI{} protocol is basically an asymmetric set membership testing. 
That is it allows $\PB{}$ to query whether its ID match the ones own by $\PA{}$.
At the end of the protocol execution,
\PB{} obtains a secretly shared testing bits 
$b_i = 1$ if $\id^{1}_{B, i} \in \ID^{1}_A$, or $b_i = 0$ otherwise.
One of the difficulties of adopting the \cPSI{} to the multi-ID setting is to obliviously delete records that already matched from the previous IDs. 
That is because the position of each ID is computed using a hash function. 
For instance, the position of the identifier ``a@123'' 
is computed via $i = {\sf Hash}\mbox{("a@123")}$, and then it is assigned to the $i$-th position in the query vector.
It renders an inconsistent order of result bits when performing the queries on a record of different IDs.
Communication extensive cryptographic protocols (e.g., oblivious permutation) are needed to align the order of the of testing bits (and the payloads). 

In the ad applications, the intersection size can be significantly smaller than the input sets, \ie, $|X \cap Y| \ll {\rm min}(|X|, |Y|)$.
For instance, a typical ratio is less than 1\%. 
In other words, to evaluate a downstream function over the intersection using  fully private techniques might introduce a large overhead on the downstream computation.

\section{Preliminaries}

\subsection{Definitions}

\begin{definition}(Pseudorandom Functions~\cite{DBLP:books/crc/KatzLindell2014})
\label{def:prf}
Let $F: \{0, 1\}^* \times \{0, 1\}^* \mapsto \{0, 1\}^*$ be an efficient, length-preserving, keyed function. 
We say $F$ is a pseudorandom function if for all probabilistic polynomial-time distinguishers $\mathcal{D}$, there exists a negligible function negl such that:
$\Pr[\mathcal{D}^{F_k(\cdot)}(1^n) = 1] - \Pr[\mathcal{D}^{f_n(\cdot)}(1^n) = 1] \le {\sf negl}(n)$,
where $k\usample  \{0, 1\}^n$ is chosen uniformly at random and $f_n$ is chosen uniformly at random from the set of functions mapping $n$-bit strings to $n$-bit strings.
\end{definition}

\noindent{\bf Honest-but-Curious Privacy.}
We recap the privacy definition from~\cite{cu/Goldreich2004,DBLP:journals/joc/Canetti00}. 
Let $F: \bin^* \times \bin^* \mapsto \bin^*$ be a deterministic functionality
where $F_0(x_0, x_1)$ (resp. $F_1(x_0, x_1)$) denotes the 1st element
(resp. the 2nd) of $F(x_0, x_1)$, and let $\Pi$ be a two-party protocol
for computing $F$.
The view of $\PA{}$ (resp. $\PB$) during an execution of $\Pi$ on $(x_0, x_1)$ is denoted 
$\VIEW^{\Pi}_A(x_0, x_1)$ (resp. $\VIEW^{\Pi}_B(x_0, x_1))$.
\begin{definition} (Honest-but-Curious Privacy)
\label{def:HBC}
For a function $F$, we say
that $\Pi$ privately computes $F$ if there exist  probabilistic polynomial time algorithms,
denoted $\SIM_0$ and $\SIM_1$, such that
\begin{align*}
 \{\SIM_0(x_0,F_0(x_0, x_1))\}_{x_0,x_1} \overset{c}{\equiv} \{\VIEW^{\Pi}_A(x_0, x_1)\}_{x_0,x_1} \\
 \{\SIM_1(x_1,F_1(x_0, x_1))\}_{x_0,x_1} \overset{c}{\equiv} \{\VIEW^{\Pi}_B(x_0, x_1)\}_{x_0,x_1},
\end{align*}
where $\overset{c}{\equiv}$ denotes computational indistinguishability.
This definition states that the views of the parties can be properly constructed by a polynomial time algorithm given
the party's input and output solely. 
Also, the parties here are semi-honest and the view is therefore exactly according to the protocol specification.

\end{definition}

\noindent{\bf Differential Privacy.}
We recall a definition of approximate differential privacy for interactive two-party protocols, following~\cite{mcgregor2010limits}.
Let $\Sigma$ be a finite alphabet and for strings $x, y \in \Sigma^n$, let $|x - y|_H$ denote the Hamming distance between $x$ and $y$.
\begin{definition}($(\epsilon, \delta)$-Differential Privacy)  
\label{def:dp}
A mechanism $\mathcal{M}$ on $\Sigma^n$ is a family of probability distributions $\{\mu_x : x \in \Sigma^n\}$ on $\mathcal{R}$.
The mechanism is $(\epsilon, \delta)$-differentially private if for every $x$ and $x'$ such that $|x - x'|_H = 1$ and every measurable subset $S \subset \mathcal{R}$ we have
  \[
    \mu_x(S) \le \exp(\epsilon) \mu_{x'}(S) + \delta.
  \]
\end{definition}

The definition of differential privacy naturally extends to interactive protocols, by requiring that the views of all parties be differentially private in respect to other parties' inputs. 
More specifically, let $\VIEW^{A}_{\Pi}(x, y)$ be the joint probability distribution over $x$, the transcript of the protocol $\Pi$, private
randomness of \PA{}, where the probability space is private randomness of both parties. 
For each $x$, $\VIEW^{A}_{\Pi}(x, y)$ is a mechanism over the $y$'s. Let $\VIEW^{B}_{\Pi}(x, y)$ be similarly defined view of \PB{} whose input is $y$.
\begin{definition}\label{def:2pc_dp} (Differential privacy for two-party protocols)
We say that a protocol $\Pi$ enables $(\epsilon, \delta)$-differential privacy if the mechanism 
$\VIEW^{A}_{\Pi}(x, y)$ is $(\epsilon, \delta)$-differentially private for all values of $x$,
and same holds for $\VIEW^{B}_{\Pi}(x, y)$ for all values $y$.
\end{definition}

\subsection{Security Model}
We chose to target the security against honest-but-curious, \ie semi-honest adversaries, due to 
the strong efficiency and monetary requirements of ad conversion applications.
The semi-honest model provides strong privacy protections against data breaches on either side since semi-honest protocols leak nothing beyond the prescribed protocol output.

We protect the intersection size by having each party independently sample and add dummy identifiers from a common set of dummies.
Even under the semi-honest setting, all parties should follow the protocol descriptions, e.g., sampling dummy identifiers from the proper set.
However, there is a prisoner's dilemma where an adversary can learn the exact intersection size if he/she simply skips the sampling step, and the honest party cannot detect this behavior. To guarantee that each party correctly performs the DP mechanism, we rely on secure hardware such as Intel SGX to provide code attestation.
In brief, the outputs generated by the hardware are indeed from the code that it attested to. 
Note that we \textbf{do not} assume confidentiality for the secure hardware since, in our case, a party uses his/her own hardware to convince the other party that he/she has executed a public piece of code properly. 
This practice of using secure hardware to achieve code attestation is also employed by other works such as~\cite{DBLP:conf/sp/0001RCGR020}.

\subsection{Additively Homomorphic Encryption (AHE)}
An AHE scheme $\mathcal{HE} = ({\sf Gen}, {\sf Enc}, {\sf Dec}, {\sf Sum}, {\sf Refresh})$
consists of 5 algorithms.
${\sf Gen}$ outputs a public-private key pair $(\pk, \sk)$, and specifies a message space $\mathcal{M}$.
Given the public key $\pk$ and a plaintext message $m \in \mathcal{M}$, one can compute a ciphertext $\AEnc(m)$, an encryption of $m$ under $\pk$.
Given the secret key $\sk$ and a ciphertext $C = {\sf Enc}(m)$, one can run ${\sf Dec}(C)$ to recover the plaintext $m$.

Given the public key $\pk$ and a set of ciphertexts $\{C_i\}$ encrypting messages
$\{m_i\}$, one can homomorphically compute a ciphertext encrypting the sum of the underlying messages, which we denote 
for ease of exposition as: ${\sf Enc}(\sum_i m_i) = {\sf Sum}(\{C_i\}_i)$.
We will also use the property that one can randomize ciphertexts using a randomized procedure denoted as ${\sf Refresh}$. 
The two distributions 
${\sf Refresh}({\sf Sum}(\{{\sf Enc}(m_i)\}_i))$ and 
${\sf Refresh}({\sf Enc}(\sum_i m_i))$
are statistically close, even against an adversary that holds the decryption key $\sk$.

We can instantiate the AHE scheme by the Brakerski/Fan-Vercauteren scheme (BFV) scheme~\cite{DBLP:conf/crypto/Brakerski12,DBLP:journals/iacr/FanV12} or the Paillier scheme~\cite{DBLP:conf/eurocrypt/Paillier99}.
Particularly, the current work~\cite{google-psi} used the Paillier scheme while we present some optimizations for the BFV scheme.

\section{Proposed Protocol for Waterfall Matching}

\begin{figure*}
\centering
\includegraphics[width=\textwidth]{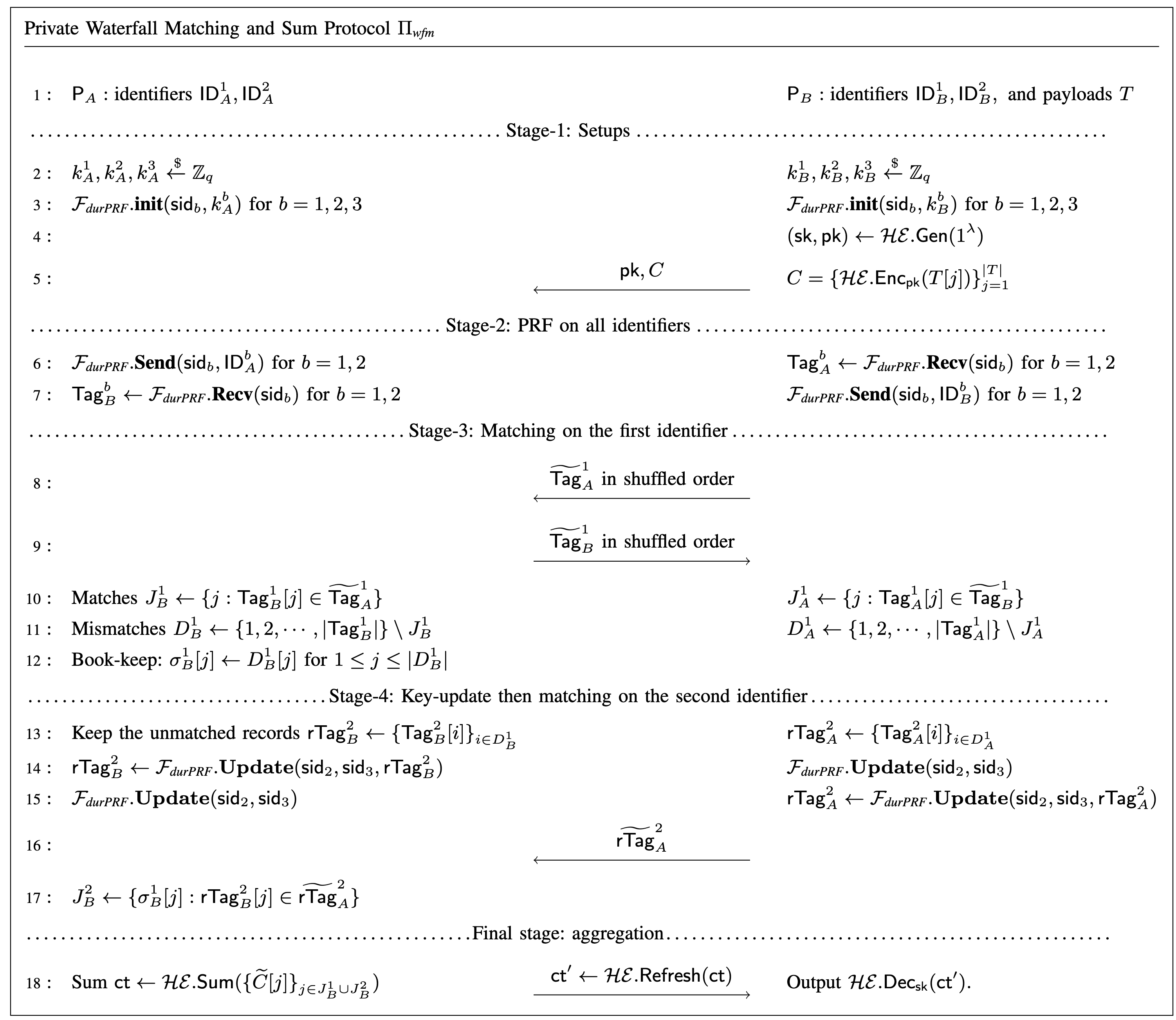}
  \caption{Our private waterfall matching protocol under $\F{\PRF}$-hybrid model. 
  $\mathcal{HE}$ is an AHE scheme.
  \label{fig:proposed_ddh}
}
\end{figure*}

\subsection{Private Waterfall Matching with Sum Protocol from $\F{\PRF}$}

The concrete construction of our private waterfall matching protocol is given in Figure~\ref{fig:proposed_ddh}.
Our protocol operates in five stages. 


\begin{enumerate}
\item In the first stage, \PB{} sends homomorphic encryption of \PB{}'s payloads to \PA{} along with his public key.
\item The two parties first send the ``random tags'' of their IDs using multiple $\PRF$.
For instance, on the $b$-th ID column,
$\PB$ sends $\mathcal{F}_{\PRF}.{\bf Send}(\sid_b, \ID_B^b)$ 
and $\PA$ receives ${\sf Tag}^b_B  \leftarrow \mathcal{F}_{\PRF}.{\bf Recv}(\sid_b)$.
That is \PA{} obtains the PRF evaluations on \PB{}'s $b$-th ID column under a shared key $k_b$.
On the other hand, \PB{} receives the PRF evaluations on \PA{}'s IDs under the corresponding key, denoted as ${\sf Tag}^b_A$.

  \item By exchanging the tags on the 1st ID column, \PA{} and \PB{} can obtain the intersection in a reversed way.
    For instance, after receiving  $\widetilde{{\sf Tag}}^1_A$ in a shuffled order from \PB{}, 
    \PA{} knows the matching positions 
    $
    J_B^1 = \{i: {\sf Tag}^1_B[i] \in \widetilde{{\sf Tag}}^1_A\}
    $
    on the 1st ID column, but \PA{} does not learn the exact matched row index in \PA{}'s dataset due to the shuffling.
    Similarly \PB{} knows $J_A^1 = \{i: {\sf Tag}^1_A[i] \in \widetilde{{\sf Tag}}^1_B\}$.
    
  \item Moreover, to avoid the type-E cross-ID leakage,
     \PA{} and \PB{} can first remove the matched rows  from the received tags of the 2nd ID column.
     Particularly, $\PA$ prepares ${\sf rTag}^2_B = \{{\sf Tag}^2_B[i] = F_{k_2}(\ID^B[i])\}_{i \notin J_B^1}$
     and $\PB$ prepares ${\sf rTag}^2_A = \{{\sf Tag}^2_A[i] = F_{k_2}(\ID^2_A[i])\}_{i \notin J_A^1}$.
    Then they perform the matching on the remaining tags.
    For instance, on receiving $\widetilde{{\sf rTag}}^2_A$ from \PB{},
    \PA{} can know the matching positions due to the 2nd ID column: $J_B^2 = \{\sigma(i): {\sf rTag}^2_B[i] \in \widetilde{{\sf rTag}}^2_A\}$
    where $\sigma$ is an index mapping from $|{\sf rTag}^2_A| \mapsto |\ID^1_A|$ that maps the matching positions in ${\sf rTag}^2_A$ to their global positions in $\ID^1_A$.
    \textbf{The type-E cross-ID leakage is avoided from the sense that $J_B^1\cap J_B^2 = \emptyset$.}

    However, it might introduce the type-X cross-ID leakage.
    For instance, when $\exists i \in J^1_B \Rightarrow {\sf Tag}^2_B[i] \in \widetilde{{\sf rTag}}^2_A$, it reveals to $\PA$ that
    a record $i$ of \PB{} is matched by two records of \PA{}, one due to the 1st ID, and the other is due to the 2nd ID.
    To mitigate the type-X cross-ID leakage, we further leverage the blind-update command as follows
     \begin{align*}
       \PB: {\sf rTag}^2_A \leftarrow \mathcal{F}_{\PRF}.{\bf Update}(\sid_2, \sid'_2, {\sf rTag}^2_A)\\
       \PA: {\sf rTag}^2_B \leftarrow \mathcal{F}_{\PRF}.{\bf Update}(\sid_2, \sid'_2, {\sf rTag}^2_B)
     \end{align*}
    \textbf{so that 
    ${\sf Tag}^2_B \cap {{\sf rTag}}^2_A = \emptyset$ 
    and ${\sf Tag}^2_A \cap {{\sf rTag}}^2_B = \emptyset$ due to the distinct PRF keys. Thus the type-X leakage is also avoided.}

  \item 
        \PA{} can then aggregate the HE ciphertexts using the matching lists $J^1_B$ and $J^2_B$.
\end{enumerate}

\begin{theorem}
\label{thm:proposed_ddh_security}
If the AHE scheme $\mathcal{HE}$ is semantic secure then the protocol $\Prot{wmf}$ in Figure~\ref{fig:proposed_ddh} securely realizes
the $\F{wmf}$ functionality of Figure~\ref{fig:waterfall_match} 
against semi-honest adversaries under $\F{\PRF}$-hybrid.
\end{theorem}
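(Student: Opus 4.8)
The plan is to establish security in the simulation paradigm of Definition~\ref{def:HBC} by exhibiting two PPT simulators $\SIM_A$ and $\SIM_B$ that reproduce $\VIEW^{\Prot{wmf}}_A$ from $(\ID^1_A,\ID^2_A,s_1,s_2)$ and $\VIEW^{\Prot{wmf}}_B$ from $(\ID^1_B,\ID^2_B,T,c,s_1,s_2)$, respectively. Because we argue in the $\F{\PRF}$-hybrid model, the {\bf Send}, {\bf Recv}, and {\bf Update} steps are ideal oracle calls, so every tag a party observes is by construction a uniformly random string in $\{0,1\}^\lambda$, with two tags coinciding precisely when the underlying identifiers are equal. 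The guiding observation is that, since each party applies a private uniform permutation to its records and $\F{\PRF}$ returns the exchanged tags shuffled, the only input-dependent quantities surfacing in either view are the two match cardinalities $s_1$ and $s_2$.

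For $\SIM_A$, since \PA{} never holds $\sk$, the simulator runs $\AGen$ afresh and replaces the stage-one payload ciphertexts by encryptions of $0$; indistinguishability of this one substitution is exactly the semantic security of $\mathcal{HE}$. It then samples ${\sf Tag}^1_B,{\sf Tag}^2_B$ as fresh uniform strings (the values $\F{\PRF}$ would deliver), draws a uniform $s_1$-subset $J^1_B\subseteq[n_B]$ and a uniform $s_2$-subset $J^2_B\subseteq[n_B]\setminus J^1_B$ (with $n_B$ the number of \PB{}'s records), and forms the shuffled messages $\widetilde{{\sf Tag}}^1_A,\widetilde{{\sf rTag}}^2_A$ by copying the tags at the chosen positions and padding with fresh uniform strings. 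The outputs of the blind {\bf Update} are simulated as fresh uniform strings carrying exactly the $s_2$ second-column coincidences, which is faithful because the rekeying places the matched tags under an independent key. Each such step reproduces the real distribution exactly, up to the negligible PRF-collision probability.

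For $\SIM_B$, here \PB{} legitimately owns $\sk$, so the simulator runs $\AGen$ and emits the genuine encryptions of $T$; it samples ${\sf Tag}^1_A,{\sf Tag}^2_A$ uniformly and plants $s_1$ and $s_2$ collisions into the shuffled messages $\widetilde{{\sf Tag}}^1_B,\widetilde{{\sf rTag}}^2_B$ received from \PA{}. The single genuinely cryptographic obligation is the aggregate ciphertext returned to \PB{}: $\SIM_B$ outputs $\AFresh(\AEnc(c))$, and the statistical closeness of $\AFresh(\ASum(\cdots))$ to $\AFresh(\AEnc(c))$---which holds even against a distinguisher in possession of $\sk$---guarantees that \PB{} learns $c=\sum_{i\in J^1_B\cup J^2_B}T[i]$ but not which payloads were combined. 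Indistinguishability of each view then follows by a short hybrid argument: one hybrid swaps the real payload ciphertexts for encryptions of $0$ (semantic security), a second swaps the true aggregate for a refreshed fresh encryption of $c$ (the ${\sf Refresh}$ property), and all remaining tag-level changes are information-theoretically identical up to negligible collision terms.

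I expect the crux to be justifying that the pair of match-position sets $(J^1_B,J^2_B)$ which \PA{} unavoidably learns---it needs them to select the ciphertexts aggregated in Stage~5---conveys nothing beyond $(s_1,s_2)$, so that the simulator is entitled to resample them as independent uniform subsets. This is exactly where the three structural guarantees are invoked: the private initial permutation makes each revealed match-set a uniformly random subset of the prescribed cardinality; the private row deletion forces $J^1_B\cap J^2_B=\emptyset$, killing the type-E correlation; and the blind key rotation renders the second-column tags independent of the first, killing the type-X correlation. Making this decoupling rigorous---that the joint equality pattern across both columns, as seen by either party, is a function of $(s_1,s_2)$ alone---is the main obstacle; once established it licenses the independent uniform planting in both simulators and reduces the entire argument to the semantic security of $\mathcal{HE}$ together with the statistical ${\sf Refresh}$ property.
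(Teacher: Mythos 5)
Your proposal is correct and takes essentially the same route as the paper's own proof: a simulator in the $\F{\PRF}$-hybrid model that replaces every tag by a uniform string with collisions planted according to $(s_1,s_2)$, replaces \PB{}'s payload ciphertexts seen by \PA{} with encryptions of zero via semantic security, and simulates the aggregate returned to \PB{} as $\AFresh(\AEnc(c))$ via the statistical ${\sf Refresh}$ property. If anything, you are more explicit than the paper on the two points it glosses over---planting the $s_1$ first-column collisions (which the paper's simulation step for $\widetilde{{\sf Tag}}^1_A$ omits) and justifying that the revealed match-position sets $(J^1_B, J^2_B)$ are re-samplable as uniform disjoint subsets given only the sizes, which the paper dismisses with the remark that ``the matching positions are randomly shuffled.''
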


\begin{proof}
Let $\Adv$ be a polynomial-time adversary that may corrupt $\PA$ and $\PB$.  
We construct a simulator $\Sim$ with access to functionality $\F{\PRF}$ that runs $\Adv$ as a subroutine. 
Moreover the $\Sim$ is also given the intersection sizes $s_1$ and $s_2$. 
We write $\mathcal{O}$ to denote the domain of the PRF, and
write $n_A$ and $n_B$ to denote the database sizes of $\PA$ and $\PB$, respectively.


\noindent{\textbf{Corrupted $\PA{}$ with a honest $\PB{}$.}} 
\begin{enumerate}
     \item Let $(k^1_A, k^2_A, k^3_A)$ be the keys  $\Adv$ sends to $\F{\PRF}$.\textbf{Init}. $\Sim$ sends an AHE public key, and $n_B$ AHE ciphertexts of zero to $\Adv$ on behalf of $\PB$.
     \item Let $\sid_b$ ($b = 1, 2, 3$) be the session id $\Adv$ sends to $\F{\PRF}$.\textbf{Recv}. $\Sim$ sends $n_B$ random elements of $\mathcal{O}$ to $\Adv$ on behalf of $\F{\PRF}$.
     \item $\Sim$ sends $n_A$ random elements from $\mathcal{O}$ to $\Adv$ on behalf of $\PB{}$.
     \item Let $\sid_2, \sid_3$ be the session ids $\Adv$ sends to $\F{\PRF}$.\textbf{Update}. $\Sim$ samples $n_B - s_1$ random elements (designated by $R$) from $\mathcal{O}$ to $\Adv$ on behalf of $\F{\PRF}$.
     \item $\Sim$ samples $n_A - s_1$ random elements of $\mathcal{O}$ and randomly replaces $s_2$ of them by a subset of $R$.
     Then $\Sim$ sends them to $\Adv$ on behalf of $\PB$.
\end{enumerate}
It is not hard to see that the simulation is computationally close to an execution of $\Prot{wfm}$ in the $\F{\PRF}$-hybrid world.
In particular, the simulation relies on the fact that messages sent by $\PB$ are either AHE ciphertexts or PRF outputs, and the matching positions (e.g. $J^1_B, J^2_B$) are randomly shuffled.

The simulation for $\PB$'s view is similar except that we additionally simulate an AHE of aggregation result in Step 18 where the privacy is guaranteed by the $\AFresh$ function.
\end{proof}



\subsubsection{AHE Instantiation from (R)LWE}
Learning-With-Errors (LWE)-based encryption schemes are usually quite computationally efficient compared to schemes like Paillier and DGK, since they do not involve expensive modular-exponentiation operations to encrypt and decrypt. 
However, the batch of LWE ciphertexts of the payloads can be large in volume. 
To further reduce the communication overhead, we can first transfer a batch of $N$ (\eg, $N = 4096$) payloads inside a Ring LWE ciphertext instead of a batch of LWE ciphertexts. 
This reduces the communication by a factor of $O(N)$.
Then, the ciphertext receiver (\eg, \PA{} in our case) can unpack the encrypted payloads from the received Ring LWE ciphertext, \ie, one LWE ciphertext for each payload value, by performing some inexpensive local operations.
The aggregation is then performed over the LWE ciphertexts.

We explicitly export the Ring LWE as an encapsulation layer of the LWE ciphertexts to be more efficient and to offer more flexibility.
Particularly, for the efficiency, 
the sum step over LWE ciphertexts is about $2\times$ faster than the corresponding homomorphic additions over the Ring LWE ciphertexts.
For the flexibility, when comes to the secret sharing variant described in the following section, 
\PA{} can return a "packed" version of the matched LWE ciphertexts $\{C[j]\}_{j\in J^1_B \cup J^2_B}$ to \PB{} using the technique from~\cite{DBLP:conf/acns/ChenDKS21}.

\subsection{Variants}
We considered variants of the waterfall matching protocol that modify the functionality and offer important flexibility.

\subsubsection{Multiple Payloads} 
Our protocol generalizes easily to cases where each row in \PB{}'s dataset has multiple types of payloads (\eg amount spent, and label of conversion), and parties wish to learn the sum of each type of payload.
This is handled by simply encrypting each column separately using the AHE, and creating one sum for each type of payload.

\subsubsection{Secret Sharing the Matched Payloads} 
One variant is to let \PA{} reshare the matched HE payloads to \PB{} in an additively secret shared form, rather than computing the aggregated sum.
This variant allows more complicated functions beyond the aggregated sum to be evaluated on the intersection set.
To achieve this, \PA{} now sends $\AFresh\left(\AEnc(C[j] + r_j)\right)$ a batch of AHE ciphertexts to \PB{} in Step 17.
for $j \in {J}^1_B \cup {J}^2_B$. The masking values $\{r_j\}$ are sampled uniformly from the message space of $\mathcal{HE}$. 
Then, the matched payloads $\{T[j]\}$ are now additively shared between the two parties, with \PB{} holding $T[j] + r_j$ and \PA{} holding $-r_j$, while \PB{} is unaware of the matching positions $j \in {J}^1_B \cup {J}^2_B$.

\subsubsection{Payloads from Both Parties} 
Some ad conversion computations involve payloads from both the publisher and the advertiser.
Our protocol also generalizes easily to support payloads from both sides. 
Particularly, we make the following modifications over Figure~\ref{fig:proposed_ddh} to support payloads from both sides:
\begin{enumerate}
  \item \PA{} now generates the AHE key-pair and sends the public key to \PB{} in Step 4.
  \item \PA{} also encrypts its payloads using its public key and attaches the AHE ciphertexts in Step~4.
  \item \PA{} now needs to send the tags ${\sf rTag}^2_{B}$ to \PB{} in Step~15 in a shuffled order. 
        This enables \PB{} to obtain the matching positions on the second identifier ${J}^2_A$.
        By this point, \PB{} is already able to pick up \PA{}'s (encrypted) payloads in the intersection, \ie, ${J}^1_A \cup {J}^2_A$.
\end{enumerate}

\subsection{Concrete Constructions of $\F{\PRF}$}
We present two constructions of $\F{\PRF}$ using Hashed Diffie-Hellman 
(HashDH) and Yao's Garbled circuit~\cite{yao1986generate}, respectively.
Particularly, we leverage a functionality $\F{GC}$ 
which takes inputs $x$ and $y$ from
a circuit generator and a circuit evaluator, respectively, 
and returns the evaluation $BC(x, y)$ to the circuit evaluator for 
a Boolean circuit $BC: \bin^* \times \bin^* \mapsto \bin^*$ that is agreed upon by both the generator and the evaluator.

\figbox{Protocol $\Prot{\PRF}^{\sf DDH}$ from HashDH}{
\small

{\bf Initialization:} 

\PA{} samples $k_A, k'_A \usample \mathbb{Z}_q$.
\PB{} samples $k_B, k'_B  \usample \mathbb{Z}_q$.

{\bf Evaluation:~\cite{DBLP:conf/scn/JareckiL10} } $P_A$ inputs $X = \{x_i\}_{i=1}^{m}$.
\begin{enumerate}
  \item \PA{}: $a_i \leftarrow \hash(x_i)^{k_A}$ for $x_i \in X$.
  \item $\PA{} \rightarrow \PB{}: \{a_i \in \GG \}_i$.
  \item \PB{} outputs $Z = \{a_i^{k_B}\}$.
\end{enumerate}

{\bf Update:~\cite{DBLP:conf/uss/EverspaughCSJR15} } $P_B$ selects a subset $Z' \subset Z$.
\begin{enumerate}
  \setcounter{enumi}{3}
  \item $\PB{}: b_j \in \GG\leftarrow z_j^{k'_B/k_B}$ for $z_j \in Z'$.
  \item $\PA{} \leftarrow \PB{}: \{b_j\}_j$ in a shuffled order.
  \item $\PA{}: b'_j  = b_j^{k'_A/k_A}$.
  \item $\PA{} \rightarrow \PB:  \widetilde{Z} = \{b'_j\}_j$.
  \item \PB{} outputs $\widetilde{Z}$ by undoing the shuffle in Step 2. 
\end{enumerate}

}{
The $\PRF$ protocol from HashDH.
Both parties agree upon a group $\GG$ of a prime order $q$ that the DH assumption holds, and
a hash function $\hash: \bin^*\to \GG$ that maps bit strings to elements of $\GG$.
The corresponding PRF is defined as $F_k(x) = \hash(x)^k$.
\label{fig:DDH_OPRF}
}

Specifically,  in the HashDH-based construction,  we leverage a PRF function defined as $F_k(x) = \hash(x)^k$ where $\hash$ is a function 
that maps bit strings to elements of $\GG$.

\begin{theorem}
  If $\hash$ is modeled as a random oracle, and DDH assumption holds for the group $\GG$,
  the protocol $\Prot{\PRF}^{\sf DDH}$ in Figure~\ref{fig:DDH_OPRF} securely realizes the functionality $\F{\PRF}$ with the PRF defined as $F_k(x) = \hash(x)^k$.
\end{theorem}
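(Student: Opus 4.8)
The plan is to prove the statement in the semi-honest simulation paradigm, exhibiting a polynomial-time simulator $\Sim$ for each corrupted party whose output is computationally indistinguishable from that party's real view, where the random-oracle model lets us program the answers of $\hash$ and the DDH assumption lets us swap honestly-exponentiated group elements for uniform elements of $\GG$. Writing the combined key of a session as the product $k = k_A k_B \in \mathbb{Z}_q$ (the concrete analogue of the abstract $k_A \oplus k_B$ in $\F{\PRF}$), the guiding observation is that every wire message has the form $\hash(x)^{e}$ for an exponent $e$ that is a product of key shares, at least one of which is unknown to the observing party. By the random self-reducibility of DDH, a whole batch of such elements sharing a common hidden exponent is indistinguishable from independent uniform elements, subject only to the determinism that equal inputs yield equal tags, which is exactly what the tables $T_{\sid}$ and $T_{\sid'}$ in $\F{\PRF}$ record. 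Because the evaluation steps are the reversed HashDH OPRF of~\cite{DBLP:conf/scn/JareckiL10} and the rotation steps reuse the key-update of~\cite{DBLP:conf/uss/EverspaughCSJR15}, I would lift those two components and then show their composition realizes the combined table semantics, including the cross-session overwrite rule.

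For a corrupted $\PA$ (sender during evaluation, assistant during update), the only incoming message is the shuffled set $\{b_j = \hash(x_j)^{k_A k'_B}\}$ of Step 5; during evaluation $\PA$ only sends. Given just $m = |Z'|$ from $\F{\PRF}$, the simulator outputs $m$ independent uniform elements of $\GG$. Since $\PA$ can form the bases $c_j = \hash(x_j)^{k_A}$ but does not know $k'_B$, the real message is $\{c_{\pi(j)}^{k'_B}\}$ for a hidden permutation $\pi$ and a hidden common exponent $k'_B$; the self-reducible DDH hybrid replaces these by uniform elements one at a time, and the Step-5 shuffle is what lets the simulator omit any correspondence, matching the promise that $\PA$ learns nothing beyond $m$.

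For a corrupted $\PB$ (receiver during evaluation, updater during update), the view is the evaluation tags $\{a_i = \hash(x_i)^{k_A}\}$ from Step 2 together with the rotated tags $\widetilde{Z} = \{b'_j = \hash(x_j)^{k'_A k'_B}\}$ from Step 7. The simulator answers each fresh evaluation query with a uniform element of $\GG$ and caches it, so repeated inputs stay consistent, mirroring $T_{\sid}[x_i] \usample \{0,1\}^\lambda$. For the update outputs it mirrors the branching of $\F{\PRF}$: if the underlying input already carries a tag under the target session $\sid'$ it returns that cached value, otherwise a fresh uniform element that it then caches under $\sid'$. Because $\PB$ knows $k_B, k'_B$ but neither $k_A$ nor $k'_A$, each real tag is once more a hidden-exponent power of a random-oracle base, so a hybrid that first re-randomizes the $a_i$ and then the $b'_j$ reduces indistinguishability to DDH.

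I expect the main obstacle to be the joint, cross-session consistency of the rotation rather than any single reduction in isolation. The blind update simultaneously exposes the intermediate value $\hash(x)^{k_A k'_B}$ to $\PA$ and the final value $\hash(x)^{k'_A k'_B}$ to $\PB$, so the hybrid sequence must at once hide from $\PA$ the linkage between $b_j$ and the original tag, hide from $\PB$ the multiplicative relation between the old key $k_A k_B$ and the new key $k'_A k'_B$ (this unlinkability across keys is exactly what defeats type-X leakage), and still preserve the equality that ties an update output to a prior evaluation under $\sid'$. Enforcing that last constraint forces the DDH reduction to embed challenge instances coherently across the two tables $T_{\sid}$ and $T_{\sid'}$; the cleanest route is to re-randomize one session's exponent at a time, so that each step is bounded by a single self-reducible DDH instance while the caching logic imposes equality on exactly the entries where $\F{\PRF}$ does.
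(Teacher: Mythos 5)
Your proposal follows essentially the same route as the paper's proof: a semi-honest simulation argument in which each corrupted party's incoming messages (all of the form $\hash(x)^e$ with a hidden exponent) are replaced by uniform elements of $\GG$, justified by programming the random oracle and invoking DDH with its random self-reducibility, with the Step-5 shuffle supplying the unlinkability needed for the update. The only difference is one of care, not of method: your explicit cross-session caching to match the tables $T_{\sid}$, $T_{\sid'}$ of $\F{\PRF}$ is a consistency point the paper's simulators (which simply emit fresh random elements, or $b_j^r$ for a random $r$) leave implicit.
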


We give a sketch proof here, and defer the formalized proof to Appendix.
\begin{proof}(Sketch.)
The proof follows the arguments in~\cite{DBLP:conf/scn/JareckiL10,DBLP:conf/uss/EverspaughCSJR15}.
w.l.o.g, we assume \PA{} acts as the Sender with input $X$ and \PB{} acts as the Receiver who
will invoke the {\bf Update} command.

\noindent{\bf Correctness.} 
  For the evaluation part, \PB{} obtains $a_i = \hash(x_i)^k_A$ which is then lifted to $a_i^{k_B}$ and forms $\hash(x_i)^{k_Ak_B\bmod q}$.
  For the update part, 
\[
  \left((\hash(x_j)^k)^{k'_B/k_B}\right)^{k'_A/k_A} = (\hash(x_j)^{k'_Bk_A})^{k'_A/k_A} = \hash(x_j)^{k'_Bk'_A},
\]
which is the PRF evaluation $F_{k'}(x_j)$ under the new key $k' = k'_A \cdot k'_B$.

\noindent{\bf Sender's (\PA{}) Privacy.}  
  For the evaluation part, we claim that 
  the views of the receiver (\ie, \PB{}) - \ie, $a_i = \hash(x_i)^{k_A}$ for $i = 1, \cdots, m$, where $\hash$ is modeled as a random oracle - 
  is indistinguishable from $r_1,\cdots, r_m$ with $r_i \usample \GG$.
  For the update part, 
  \PB{} knows $\{b_j\}$ but \PB{} can not 
  distinguish $\{(b_j)^{k'_A/k_A}\}_j$ from random elements from $\GG$ under the DDH assumption and random oracle model.

\noindent{\bf Receiver's (\PB{}) Privacy.} 
  For the evaluation part, \PB{}'s privacy is achieved directly from the CDH assumption.
  That is, \PA{} who knows $a_i$ can not derive \PB's secret $k_B$ after seeing $a_i^{k_B}$.
  For the update part, the views of the sender (\PA{}), \ie, $(\hash(x_j)^{k_A})^{k'_B}$, is indistinguishable from random group elements from $\GG$.
\end{proof}

\figbox{Protocol $\Prot{\PRF}^{\sf GC}$ from $\F{GC}$}{
\small

{\bf Initialization:} 

\PA{} samples $k_A, k'_A \usample \{0, 1\}^\lambda$.
\PB{} samples $k_B, k'_B  \usample \{0, 1\}^\lambda$.
Both parties aggree on a hash function $\hash': \bin^*\to \{0, 1\}^\lambda$.

{\bf Evaluation: } $P_A$ inputs $X = \{x_i\}_{i=1}^{m}$.
\begin{enumerate}
  \item \PA{}: $a_i \leftarrow \hash'(x_i)$ for $x_i \in X$.
  \item \PA{} and \PB{} invokes $\F{GC}$ to evaluate the standard AES encryptions ${\sf AES}_{k}(\cdot)$. 
    Particularly, \PA{} is the circuit generator with inputs $(k_A, \{a_i\})$, 
    and \PB{} is the circuit evaluator with inputs $k_B$.
    After the GC execution, \PB{} obtains $Z = \{ {\sf AES}_k(a_i)\}_{i=1}^{m}$ where the key is defined as $k = k_A\oplus k_B$.
\end{enumerate}

{\bf Update: } $P_B$ inputs $Z' \subset Z$.

\PA{} and \PB{} invokes $\F{GC}$ to evaluate the standard AES encryption.
Particularly, \PA{} is the circuit generator with inputs $k'_A$, 
and \PB{} is the circuit evaluator with inputs $(k'_B, Z')$.
After the GC execution, \PB{} obtains $\{{\sf AES}_{k'}(z_j)\}_{z_j \in Z'}$ with the key $k' = k'_A\oplus k'_B$.
}{
The $\PRF$ protocol from Garbled Circuit (GC).
The corresponding PRF is defined as $F_k(x) = {\sf AES}_k(\hash(x))$ where $\hash: \bin^*\to \{0, 1\}^\lambda$ that maps bit strings to an AES block.
\label{fig:GC_OPRF}
}

For the GC-based construction in Figure~\ref{fig:GC_OPRF}, we use the standard AES encryption ${\sf AES}_k(\hash(x))$ as a PRF function, where $\hash$ maps the input to an AES block.
The circuit generator/evaluator paradigm of GC naturally allows an oblivious evaluation of 
the PRF in a reversed manner by having the PRF receiver act as the circuit evaluator.
For the update part, we \textbf{do not} perform the key-switching 
from ${\sf AES}_k(x)$ to ${\sf AES}_{k'}(x)$, which might require a circuit 
twice the size of the evaluation part. 
Instead, we define the PRF after the update by double AES encryption 
$
F_{k'}(x) = {\sf AES}_{k'}({\sf AES}_k(\hash'(x))),
$
which also follows our interface by treating the inner evaluation ${\sf AES}_k(\hash'(x))$ as a hash from $\bin^* \mapsto \bin^\lambda$, given that the inner AES encryption acts as a pseudorandom permutation.

\begin{theorem}
  If $\hash'$ is modeled as a random oracle, 
  the protocol $\Prot{\PRF}^{\sf GC}$ in Figure~\ref{fig:GC_OPRF} securely 
  realizes the functionality $\F{\PRF}$ under the $\F{GC}$ hybrid. 
  The PRF is defined as $F_k(x) = {\sf AES}_k(\hash'(x))$.
\end{theorem}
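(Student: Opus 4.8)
The plan is to construct, for each corrupted party, a polynomial-time simulator in the $\F{GC}$-hybrid model and argue that its output is computationally indistinguishable from the party's real view, following Definition~\ref{def:HBC}. Because every cryptographic operation is delegated to the ideal $\F{GC}$, the only protocol messages a corrupted party observes are its own inputs together with the values $\F{GC}$ returns to it; the simulators therefore obtain the prescribed tags from $\F{\PRF}$ and relay them to $\Adv$ in place of $\F{GC}$'s replies.

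\textbf{Corrupted \PA{} (generator).} This is the easy direction. In both the evaluation and the update phases \PA{} acts as the circuit generator, and $\F{GC}$ delivers its output solely to the evaluator \PB{}; symmetrically, $\F{\PRF}$ returns nothing to the sender/assistant \PA{}. Hence \PA{}'s view is exactly its own inputs $(k_A, k'_A, X)$ and local randomness, so the simulator $\Sim_A$ that merely replays these is a perfect simulation.

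\textbf{Corrupted \PB{} (evaluator).} Here \PB{}'s view is its inputs $(k_B, k'_B, Z')$ together with the two $\F{GC}$ outputs: the evaluation tags $Z = \{{\sf AES}_k(a_i)\}$ and the update tags $\{{\sf AES}_{k'}(z_j)\}$. The simulator $\Sim_B$ issues the matching \textbf{Recv} and \textbf{Update} calls to $\F{\PRF}$, receives the random and table-consistent tags, and forwards them to $\Adv$ as if produced by $\F{GC}$. Indistinguishability follows from a short hybrid argument on two facts: (i) since \PB{} never sees \PA{}'s contributions $k_A, k'_A$, the effective keys $k = k_A\oplus k_B$ and $k' = k'_A\oplus k'_B$ are uniform from \PB{}'s view, so by the PRF security of AES the images ${\sf AES}_k(\cdot)$ and ${\sf AES}_{k'}(\cdot)$ are indistinguishable from those of truly random functions; and (ii) modeling $\hash'$ as a random oracle makes the blocks $a_i = \hash'(x_i)$ distinct for distinct inputs except with negligible probability, so the pseudorandom images inherit the independence that $\F{\PRF}$ gives to distinct table keys.

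The step I expect to require the most care is reconciling the \emph{stateful} consistency of $\F{\PRF}$ with the protocol across the two phases. The functionality keys its tables by the underlying input $x$, whereas the update in the protocol is keyed by the opaque block $z_j = {\sf AES}_k(\hash'(x_j))$ and is realized by double encryption ${\sf AES}_{k'}(z_j) = F_{k'}(x_j)$ rather than a single evaluation under $\sid'$. I would argue that, because $\hash'$ is a random oracle and ${\sf AES}_k$ is a pseudorandom \emph{permutation}, the composite map $x_j \mapsto z_j$ is injective except with negligible probability; a repeated input therefore yields a repeated $z_j$ and hence a repeated update tag, exactly mirroring the lookup of $T_{\sid'}[x_j]$ in the functionality, while the idealized nature of $\F{\PRF}$ (which simply assigns uniform tags and never uses the recorded key) absorbs the double-AES structure. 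Treating the inner ${\sf AES}_k(\hash'(\cdot))$ as a high-entropy, collision-free hash then lets the outer images $\{{\sf AES}_{k'}(z_j)\}$ be swapped for fresh-but-consistent random values, closing the chain of hybrids. Finally, since \PB{}'s adaptive choice of the subset $Z'$ depends only on the already-simulated $Z$, this adaptivity creates no circularity, and summing the negligible collision and PRF-distinguishing terms gives computational indistinguishability.
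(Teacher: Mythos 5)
Your proof is correct and follows essentially the same route as the paper: in the $\F{GC}$-hybrid model all messages a party sees are ideal-functionality outputs, so realizing $\F{\PRF}$ reduces to the security of $\F{GC}$ plus the pseudorandomness of the AES-based PRF. The paper compresses this into a single sentence (``the correctness and security simply reduce to the correctness and security of $\F{GC}$''), so your explicit simulators, the AES-PRF hybrid with the uniform key $k = k_A \oplus k_B$, and the random-oracle collision argument reconciling the double-encryption update with the stateful tables of $\F{\PRF}$ are a faithful filling-in of details the paper leaves implicit.
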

The correctness and security simply reduce to the correctness and security of $\F{GC}$.

\subsubsection{Complexities}
Let $n = |X|$, $n' = |Z|$ and $C$ be the size of the AES encryption circuit. 
In total $\Prot{\PRF}^{\sf DDH}$ requires computing $2(n + n')$ exponentiations.
$\Prot{\PRF}^{\sf DDH}$ exchanges $O(n(|\GG| + 1))$ bits for the PRF evaluation, and $O(n'(|\GG| + 1))$ bits for the update step.
On the other hand, $\Prot{\PRF}^{\sf GC}$ exchanges $O(2nC\lambda + \lambda^2)$ bits for the PRF evaluation, and $O(n'C\lambda + \lambda^2)$ bits for the update step.

\section{DP Mechanism for Waterfall Matching}\label{sec:dp}

\subsection{DP Mechanism Design} \label{sec:single_id_dp}
We first consider a mechanism to protect the intersection-size for the single identifier setting, and then we extend it to the waterfall matching setting.
Randomness is introduced by padding dummy entries to both input sets $A$ and $B$ to create $\widetilde{A}$ and $\widetilde{B}$, respectively. 
This is done by randomly selecting and adding $\tau$ dummy entries from a common set $D$ with $2\tau$ rows, which is disjoint from both $A$ and $B$.
The specific steps are outlined in Algorithm~\ref{alg:DP-SI-SingleID}.

The following Theorem~\ref{th:single_id_dp} provides a simple way to enhance the privacy of any private matching protocol that leaks the intersection size.
For a given target privacy profile $(\epsilon, \delta)$, one can calculate the parameter $\tau$ from \eqref{eq:single_id_dp_profile}, which defines the size of the dummy set $D$.
\begin{theorem}\label{th:single_id_dp}
Let $\Pi$ be a semi-honest secure two-party protocol that takes private input sets $A$ and $B$ from \PA{} and \PB{}, respectively, and computes the intersection size, \ie $|A\cap B|$.
Also, let $D$ be a common set of dummies of $2\tau$ entries such that $D\cap A = \emptyset$ and $D\cap B = \emptyset$.
Then the composited protocol 
$\Pi'(A, B) := \Pi(\Mone(A, D), \Mone(B, D))$ is a semi-honest secure two-party protocol offering $(\epsilon, \delta)$-differential privacy,
where DP profile is given by
\begin{align}
\delta(\epsilon) = \frac{1}{\Comb{2\tau}{\tau}}
\left(1 +\sum_{z=\left\lceil\frac{\tau\cdot e^{\epsilon/2}-1}{{e^{\epsilon/2}+1}}\right\rceil}^{\tau-1}(\Comb{\tau}{z})^2 - e^{\epsilon}\cdot (\Comb{\tau}{z+1})^2\right).
\label{eq:single_id_dp_profile}
\end{align}
\end{theorem}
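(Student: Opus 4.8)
The plan is to reduce the differential-privacy claim to a clean additive-noise analysis and then evaluate the resulting hockey-stick divergence exactly.

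First I would pin down what is actually leaked. Since $D$ is disjoint from both $A$ and $B$ and each party adds a uniformly random $\tau$-subset of $D$ (call them $S_A,S_B$), the augmented sets satisfy $\widetilde A\cap\widetilde B=(A\cap B)\cup(S_A\cap S_B)$ with no cross terms, so the leaked cardinality is $|\widetilde A\cap\widetilde B|=|A\cap B|+Z$, where $Z=|S_A\cap S_B|$ is the overlap of two independent uniform $\tau$-subsets of a $2\tau$-element universe. A short counting argument gives $\Pr[Z=z]=\Comb{\tau}{z}\,\Comb{\tau}{\tau-z}/\Comb{2\tau}{\tau}=(\Comb{\tau}{z})^2/\Comb{2\tau}{\tau}$. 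Crucially, I would observe that from $\PA$'s viewpoint its own subset $S_A$ is known, yet by symmetry the conditional law of $Z$ given \emph{any} fixed $S_A$ is exactly this same hypergeometric distribution; hence the noise protecting $\PB$'s input is distributed as $Z$ regardless of $\PA$'s local randomness (and symmetrically for the other direction). By the semi-honest security of $\Pi$ (Definition~\ref{def:HBC}), each party's view in $\Pi'$ is simulatable from the value $|\widetilde A\cap\widetilde B|$, so verifying Definition~\ref{def:2pc_dp} reduces to showing that the mechanism $c\mapsto c+Z$ is $(\epsilon,\delta)$-DP in $c=|A\cap B|$.

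Next I would verify sensitivity and set up the divergence. Replacing one record (Hamming distance $1$) changes $c$ by at most $1$, and $Z$ is independent of the real input, so the DP condition amounts to comparing the output laws at centers $c$ and $c\pm1$, i.e.\ the law of $Z$ against its shift $Z\pm1$. Writing $p_z=(\Comb{\tau}{z})^2/\Comb{2\tau}{\tau}$ and invoking the tight discrete characterization of $(\epsilon,\delta)$-DP via the hockey-stick divergence, I would express the optimal profile as $\delta(\epsilon)=\sum_{z}\max\!\bigl(0,\;p_z-e^{\epsilon}p_{z+1}\bigr)$, with the convention $p_{\tau+1}=0$ for the $c\to c-1$ direction.

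The final step is to evaluate this sum in closed form. The index-$z$ term is nonnegative exactly when $\Comb{\tau}{z}/\Comb{\tau}{z+1}=(z+1)/(\tau-z)\ge e^{\epsilon/2}$, which rearranges to $z\ge(\tau e^{\epsilon/2}-1)/(e^{\epsilon/2}+1)$; since $(z+1)/(\tau-z)$ is strictly increasing in $z$, the contributing indices form exactly the contiguous tail $\{z^\ast,\dots,\tau\}$ with $z^\ast=\lceil(\tau e^{\epsilon/2}-1)/(e^{\epsilon/2}+1)\rceil$. The top index $z=\tau$ contributes $p_\tau=1/\Comb{2\tau}{\tau}$ (the boundary term, since $p_{\tau+1}=0$), giving the leading ``$1$'', and summing the remaining terms $p_z-e^{\epsilon}p_{z+1}$ over $z=z^\ast,\dots,\tau-1$ produces exactly~\eqref{eq:single_id_dp_profile}. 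The symmetry $p_z=p_{\tau-z}$ shows the opposite neighbor direction yields an identical value, so the stated $\delta(\epsilon)$ is the genuine worst case.

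The main obstacle I anticipate is twofold. First, I must justify rigorously that a party's knowledge of its own dummy subset does not degrade the noise distribution (the conditioning/symmetry argument), so that a single fixed law of $Z$ governs both parties' guarantees and the reduction through $\Pi$'s simulator is sound. Second, I need the \emph{exact} (not merely upper-bounded) hockey-stick value, which hinges on the strict monotonicity of the binomial ratio to guarantee the contributing indices form a single tail and that $z^\ast$ is the precise cutoff; everything else is routine counting.
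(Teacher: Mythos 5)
Your proposal is correct and follows essentially the same route as the paper: you derive the same hypergeometric law $(\Comb{\tau}{z})^2/\Comb{2\tau}{\tau}$ for the dummy overlap (the paper's Lemma~\ref{lemma:set_cdn_pmf}), exploit the same symmetry $p_z=p_{\tau-z}$ to collapse the two neighbor directions (the paper's Property~\ref{prop:symmetric}), and your direct hockey-stick sum $\sum_z\max(0,p_z-e^{\epsilon}p_{z+1})$ with boundary term $p_\tau=1/\Comb{2\tau}{\tau}$ and cutoff $z^\ast=\lceil(\tau e^{\epsilon/2}-1)/(e^{\epsilon/2}+1)\rceil$ is exactly the paper's privacy-loss-variable computation $\delta(+\infty)+\int_\epsilon^\infty(1-e^{\epsilon-\gamma})\Pr_{\bm\gamma}(\gamma)\,d\gamma$ written out termwise. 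Your explicit conditioning argument (that a party's knowledge of its own dummy subset leaves the noise law unchanged) is a careful addition the paper leaves implicit, but it does not change the substance of the argument.
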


\begin{algorithm}[t]
\begin{small}
\SetKwInOut{Input}{Input}
\SetKwInOut{Output}{Output}
\Input{Set $S$ of $n$ entries.
       Set $D$ of $\tau' > \tau$ entries such that $D\cap S = \emptyset$.}
\Output{A new set $\tilde{S}$ of $\tilde{n} = n + \tau$ entries.}

\begin{algorithmic}[1]
\STATE $D' \subset D$ by randomly selecting a subset of $\tau$ entries.

\STATE Output $\tilde{S} = S + D'$ by appending $D'$ to $S$.
\end{algorithmic}
\end{small}
\caption{$\Mone$ Mechanism for single ID}
\label{alg:DP-SI-SingleID}
\end{algorithm}

To prove Theorem~\ref{th:single_id_dp}, we utilize the following.
We write $\mathcal{M}(A, B)$ to denote a variable of the intersection size between two (single-ID) sets.
Note that, up to this point, $\mathcal{M}(A, B)$ is independent of the specific private matching protocol.
\begin{lemma}
\label{lemma:set_cdn_pmf} 
The size of the intersection between two padded sets in Algorithm~\ref{alg:DP-SI-SingleID} can be represented as
\begin{equation}
\label{eq:dpca}
\mathcal{M}(A, B) = |\Mone(A, D) \cap \Mone(B, D)| = |A\cap B| + z, 
\end{equation}
where $z$ with axisymmetrially Probability Mass Function (PMF):
\begin{align}
\label{eq:z_dpca}
\Pr(z = z')= \begin{cases}
\frac{(\Comb{\tau}{z'})^2}{\Comb{2\tau}{\tau}} \quad &0 < z' \leq \tau, \\
0 \quad & \text{others}.
\end{cases} 
\end{align}
where $\Comb{n}{r}$ is the combinations out of a group of $n$.
\end{lemma}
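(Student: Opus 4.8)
The plan is to isolate the only source of randomness---the two dummy subsets---and reduce the statement to a single hypergeometric count. Write $D'_A \subseteq D$ and $D'_B \subseteq D$ for the independent, uniformly chosen $\tau$-subsets selected by the two invocations of Algorithm~\ref{alg:DP-SI-SingleID}, so that $\Mone(A,D) = A \cup D'_A$ and $\Mone(B,D) = B \cup D'_B$ (these are disjoint unions, since $D \cap A = D \cap B = \emptyset$).

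First I would prove the decomposition in \eqref{eq:dpca}. The key point is that real items and dummy items never cross-match: from $D \cap A = \emptyset$ and $D \cap B = \emptyset$ we get $A \cap D'_B = \emptyset$ and $D'_A \cap B = \emptyset$, so expanding the intersection collapses to
\begin{equation*}
(A \cup D'_A) \cap (B \cup D'_B) = (A \cap B) \cup (D'_A \cap D'_B).
\end{equation*}
The two pieces on the right are disjoint, because $A \cap B$ avoids $D$ whereas $D'_A \cap D'_B \subseteq D$; hence their cardinalities add and $\mathcal{M}(A,B) = |A \cap B| + z$ with $z := |D'_A \cap D'_B|$. This step is purely set-theoretic and independent of the matching protocol $\Pi$, matching the remark preceding the lemma.

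Next I would determine the PMF of $z$. Since $D'_A$ and $D'_B$ are independent and each uniform over the $\tau$-subsets of the $2\tau$-element set $D$, I can fix $D'_A$ by symmetry and count the $D'_B$ meeting it in exactly $z'$ points: choosing $z'$ elements inside $D'_A$ and the remaining $\tau - z'$ among the other $2\tau - \tau = \tau$ elements of $D$ gives $\Comb{\tau}{z'}\cdot\Comb{\tau}{\tau-z'} = (\Comb{\tau}{z'})^2$ favourable choices out of $\Comb{2\tau}{\tau}$ in total. This yields $\Pr(z=z') = (\Comb{\tau}{z'})^2/\Comb{2\tau}{\tau}$ for $0 \le z' \le \tau$ and $0$ otherwise, as in \eqref{eq:z_dpca}. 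I would then confirm it is a valid PMF via Vandermonde's identity $\sum_{z'=0}^{\tau}(\Comb{\tau}{z'})^2 = \Comb{2\tau}{\tau}$, and note that $\Comb{\tau}{z'} = \Comb{\tau}{\tau-z'}$ produces the stated axisymmetric shape (symmetry about $z' = \tau/2$).

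I do not expect a genuine obstacle, as the argument is elementary combinatorics, and the resulting PMF feeds directly into the privacy-profile sum \eqref{eq:single_id_dp_profile} used in the proof of Theorem~\ref{th:single_id_dp}. The only two points requiring care are (i) verifying that no real--dummy cross-match can occur, which is precisely where the disjointness hypotheses $D \cap A = D \cap B = \emptyset$ are invoked, and (ii) making explicit that the two subsets are drawn independently, so that fixing one and counting the other is legitimate.
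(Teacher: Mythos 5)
Your proposal is correct and follows essentially the same route as the paper's proof: the same decomposition $\mathcal{M}(A,B) = |A\cap B| + |D'_A \cap D'_B|$ followed by the same hypergeometric count of $\Comb{\tau}{z'}\cdot\Comb{\tau}{\tau-z'}$ favourable choices for $D'_B$ out of $\Comb{2\tau}{\tau}$ (the paper counts over both subsets and divides by $\Comb{2\tau}{\tau}^2$, which is equivalent to your fixing $D'_A$ by symmetry). Your additions --- the explicit set-theoretic justification of the decomposition via the disjointness hypotheses, and the Vandermonde check that the PMF sums to one (which also confirms the support should read $0 \le z' \le \tau$, not $0 < z' \le \tau$ as typeset in the lemma) --- only tighten the paper's argument.
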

In cases where we concentrate on one side, say $A$, the inclusion of $B$ is omitted from $\mathcal M(A, B)$ for simplicity, \ie, only $\mathcal{M}(A)$.

\begin{proof}(Theorem~\ref{th:single_id_dp})
We introduce the privacy loss random variable~\cite{dwork2016concentrated} for a tight DP analysis. 

\begin{definition} 
\label{def:pld}
The privacy loss random variable of $\mathcal M(A)$ over $\mathcal M(A^\prime)$ for any pair of neighboring inputs $A\sim A^\prime$ is defined as follows:

\noindent (i) If both $\Pr(\mathcal{M}(A)=o)\neq 0$ and $\Pr(\mathcal{M}(A^{\prime})=o)\neq 0$, then
\begin{equation}
\label{eq:PLD}
\bm \gamma_{AA^\prime} =
\ln\left(\frac{\Pr(\mathcal{M}(A)=o)}{\Pr(\mathcal{M}(A')=o)}\right).
\end{equation}
(ii) If  $\Pr(\mathcal{M}(A^{\prime})=o) = 0$ and $\Pr(\mathcal{M}(A^{\prime})=o) \neq 0$, then
$\bm \gamma_{AA^\prime} = \infty$.
\end{definition}
The quantity $\bm \gamma_{AA^\prime}$  
is the logarithmic ratio between the probability of observing outcome $o$ on input $A, B$ compared to input $A', B$, which is referred to as the privacy loss variable. From Definition~\ref{def:pld}, it is apparent that the privacy analysis involves both $\bm \gamma_{AA'}$ and $\bm \gamma_{A' A}$  by considering the order of $A$ and $A^\prime$.
According to Definition~\ref{def:dp}, $\mathcal M$ assures tightly $(\epsilon,\delta)$-DP with $\delta$ computed as follows:
$\delta(\epsilon)=\max \left(\delta_{AA^{\prime}}(\epsilon), \delta_{A^{\prime}A}(\epsilon)\right)$ 
where
\begin{align}\label{eq:6}
  &\delta_{AA^{\prime}}(\epsilon) = \\
  &\sum_{o \in O} \max \left(\Pr(\mathcal M(A)=o)-e^{\epsilon} \Pr(\mathcal M(A^\prime)=o), 0\right),\nonumber
\end{align}

\begin{align}\label{eq:7}
  & \delta_{A^{\prime}A}(\epsilon)  = \\
  & \sum_{o \in O} \max \left(\Pr(\mathcal M(A^\prime)=o)-e^{\epsilon} \Pr(\mathcal M(A)=o), 0\right).\nonumber
\end{align}

To streamline the analysis, we demonstrate the property that when employing a  DP mechanism with axisymmetrially distributed additive noise such as that proposed in Algorithm~\ref{alg:DP-SI-SingleID}, the order of neighboring data sets can be disregarded in the privacy analysis. 

\begin{property}\label{prop:symmetric}
\label{prop:symmetric}
If and only if the PMF of $\mathcal{M}(A)$ is axisymmetrically distributed, as exemplified by $\Pr(z)$ in \eqref{eq:z_dpca}, the following relationship holds for any non-negative $\epsilon$:
$\delta_{A^\prime A}(\epsilon)=\delta_{AA^\prime}(\epsilon)$.
\end{property}

Property~\ref{prop:symmetric}  simplifies the DP analysis by allowing the privacy accountant with either \eqref{eq:6} or \eqref{eq:7}. Therefore,  simplify $\bm \gamma_{AA'}$ as $\bm \gamma$. 
Substituting \eqref{eq:z_dpca} into \eqref{eq:PLD}, we have
\begin{equation}
\label{eq:PLD1}
    \Pr_{\bm \gamma}(\bm \gamma = \ln(\Comb{\tau}{z}/\Comb{\tau}{z+1})) = \frac{(\Comb{\tau}{z})^2}{\Comb{2\tau}{\tau}}, \quad \forall z\in[0, \dots, \tau].
\end{equation}
Meanwhile, \eqref{eq:6} can be reformulated as
\begin{equation}
\label{eq:dp_profile}
\begin{aligned}
\delta(\epsilon) & \geq \mathbb{E}_{\bm \gamma}[\max \{0,1-\exp (\epsilon-\gamma)\}] \\
& =\delta(+\infty)+\int_\epsilon^{\infty}(1-\exp (\epsilon-\gamma)) \Pr_{\bm \gamma}(\gamma) d \gamma.
\end{aligned}
\end{equation}
In this context, $\delta(+\infty)$ addresses failures resulting from the support discrepancy between $\mathcal{M}(A)$ and $\mathcal{M}(A^\prime)$, where a single occurrence leads to infinite leakage, \ie,
\begin{equation}
\label{eq:inf_delta}
\delta(+\infty)=\Pr(\mathcal{M}(A)=|A\cap B|).
\end{equation}   
By further substituting \eqref{eq:inf_delta} into the second term in the right-hand-side of \eqref{eq:dp_profile}, we acquire the intersection size's DP profile using Algorithm~\ref{alg:DP-SI-SingleID}, as outlined in Theorem~\ref{th:single_id_dp} below.
\end{proof}

\begin{algorithm}[t]
\begin{small}
\SetKwInOut{Input}{Input}
\SetKwInOut{Output}{Output}
  \Input{ 
  Sets $\{S_\ell \}_{\ell = 1}^{m}, \{ D_\ell \}_{\ell=1}^{m}$, and $\{ \hat{D}_\ell \}_{\ell=1}^{m}$.
  $S_\ell$ is a set of $n$ entries.  
  $D_\ell, \hat{D}_\ell$ are sets of $2\tau$ and $2\tau m$ entries, respectively.
  }
  \Output{New sets $\{\tilde{S}_\ell\}_{\ell=1}^{m}$ of $n + m\tau$ entries each.}
\begin{algorithmic}[1]
  \FOR{$1\le k \le m$}
  \STATE Initialize $\tilde{S}_k = S_k$.
  \FOR{$1\le \ell < k$}
  \STATE Update the set $\tilde{S}_k \leftarrow \Mone(\tilde{S}_k, \hat{D}_\ell)$
  \ENDFOR{}
  \STATE Update the set $\tilde{S}_k \leftarrow \Mone(\tilde{S}_k, D_k)$
  \FOR{$k < \ell \le m$}
  \STATE Update the set $\tilde{S}_k \leftarrow \Mone(\tilde{S}_k, \hat{D}_\ell)$
  \ENDFOR{}
  \ENDFOR{}
\end{algorithmic}
\end{small}
\caption{$\Mwf$ Mechanism for multiple IDs}
\label{alg:DP-SI-MultiID}
\end{algorithm}

\subsection{DP Mechanism for the Multi-ID Matching} \label{sec:multi_id_dp}
DP protection in multi-ID scenarios can be accomplished by applying the single-ID DP mechanism to each ID column. 
It is well-known that the composition of DP mechanisms compromises the privacy of DP protection.
Suppose $\Mone$ (\ie, Algorithm~\ref{alg:DP-SI-SingleID}) 
achieves $(\epsilon_\ell,\delta_\ell)$-DP for its intersection size for the $\ell$-th ID column.
The basic composition reveals that DP protection, considering the observation of each intersection size of each ID column from the two parties, is 
$
\left(\epsilon, \delta\right)$-DP, where 
$\epsilon=\sum_{\ell=1}^m\epsilon_\ell$ and $\delta=\sum_{\ell=1}^m \delta_\ell$.
Consequently, the protective capacity of DP diminishes linearly with an increase in the number of IDs.

The linear rate of diminishing DP is attributed to the fact that different IDs matched for each column could correspond to the same record. 
If there is a way to ensure that the cross-ID dummies cover disjoint subsets of the data records, the sequential composition is no longer a tight DP analysis. 
Instead, the combined privacy loss is the maximum of $\epsilon_\ell$ for all $\ell\in [1, m]$ from the parallel composition theorem.
To ensure that cross-ID dummies cover disjoint subsets of the dummy records, we propose  Algorithm~\ref{alg:DP-SI-MultiID} to construct dummy variables. The sufficient condition for parallel composition is that the dummies padded for different ID columns come from disjoint spaces and cannot match.

\begin{property} (DP profile for Waterfall Matching.)
In Algorithm~\ref{alg:DP-SI-MultiID}, if the mechanism $\Mone$ provides achieving  $(\epsilon, \delta)$-DP for each ID column,
then the $\Mwf$ mechanism also achieves $(\epsilon, \delta)$-DP for the waterfall matching.
\end{property}

\subsection{DP-enhanced Waterfall Matching Protocol}
The DP-enhanced waterfall matching $\Prot{dp{-}wmf}$ operates as follows.

\begin{enumerate}
  \item \PA{} and \PB{} compute the dummies size $\tau$ according to \eqref{eq:single_id_dp_profile} under the DP profile $(\epsilon, \delta)$.
  \item \PA{} and \PB{} agree on dummy sets ${\sf Dmy} = (D_1, \hat{D}_1, D_2, \hat{D}_2)$.
         Particularly, 
         $|D_1| = |D_2| = 2\tau$ and $|\hat{D}_1| = |\hat{D}_2| = 4\tau$, and there are disjoint to each other, \ie,
    $D_1 \cap D_{2} = \emptyset$, 
    $\hat{D}_1 \cap \hat{D}_{2} = \emptyset$,
    $D_1 \cap \hat{D}_{2} = \emptyset$, and
    $\hat{D}_1 \cap {D}_{2} = \emptyset$.

  \item \PA{} samples $\{\widetilde{\ID}_A^b\}_b \leftarrow \Mwf(\{\ID_A^b\}_b, \{D_b\}_b, \{\hat{D}_b\}_b)$ according to Algorithm~\ref{alg:DP-SI-MultiID}.
  \item \PB{} samples $\{\widetilde{\ID}_B^b\}_b \leftarrow \Mwf(\{\ID_B^b\}_b, \{D_b\}_b, \{\hat{D}_b\}_b)$ according to Algorithm~\ref{alg:DP-SI-MultiID}.
  \item \PB{} appends $\widetilde{T} \leftarrow T\lVert 0$ with zeros to $|\widetilde{T}| = |\widetilde{\ID}_B^1|$.
  \item \PA{} and \PB{} jointly run $\Prot{wmf}$ on the modified inputs, \ie,
  $(\widetilde{\ID}_A^1, \widetilde{\ID}_A^2)$ and $(\widetilde{\ID}_B^1, \widetilde{\ID}_B^2, \widetilde{T})$.
\end{enumerate}


This DP-enhanced protocol $\Prot{dp{-}wmf}$ is clearly a secure two-party protocol as long as the base protocol $\Prot{wmf}$ is secure. 
Note that \PA{} and \PB{}'s views in $\Prot{dp{-}wmf}$ differ from their views in the base protocol $\Prot{wmf}$ only with respect to the intersection size. 
Thus, the $\Prot{dp{-}wmf}$ protocol ensures $(\epsilon, \delta)$-differential privacy (Definition~\ref{def:2pc_dp}) for the intersection sizes by a similar composition argument in Theorem~\ref{th:single_id_dp}.

\noindent{\bf Discussions.} 
$\PA$ and $\PB$ can generate a common set of randomness by sharing a random seed.
Moreover, to make sure the disjoint requirements (\ie, $D\cap A = \emptyset$ and $D\cap B = \emptyset$), they can 
leverage the specific type of the identifier.
For instance, suppose the target identifier is the "phone numbers", then $\PA$ and $\PB$ can generate the dummies from the alphabet set,
or they can generate the dummies with longer digits than a valid phone number.

\subsection{A Tighter Privacy Profile for Multiple Executions of Waterfall Matching}
There are real needs to perform matching over the same ID sets multiple times. For instance:
\begin{itemize}
  \item The ad publisher and the advertiser may perform the matching twice to aggregate ad conversion data across two ad campaigns launched in close proximity.
  \item The advertiser may want to change the payload and perform one more matching after having seen the aggregated value from the first matching.
\end{itemize}

From the privacy perspective, the notion of combining multiple DP intersection results while preserving the overall privacy guarantee is referred to as the sequential DP composition problem. 
One widely adopted analysis is the sequential composition, revealing a linear increasing rate of $\epsilon$. Given the inversely proportional relationship between the dummy size $\tau$ and $\epsilon$, it is natural to explore a tight DP composition that decreases $\tau$ the number of dummies
under the same privacy budget $\epsilon$.
We adapt the composition results shown in 
\cite{koskela2020tight} to our multiple execution cases.
Given $\Pr_{\bm \gamma}$ in \eqref{eq:PLD1}, 
after $k$ times of execution of our waterfall matching protocol, the composition is tightly $(\epsilon, \delta)$-DP for $\delta^m(\epsilon)$ given by
\begin{equation}\label{eq:delta^k}
\begin{aligned}
\delta^k(\epsilon) & =1-\left(1-\delta(+\infty)\right)^k + 
\int_{\epsilon}^{\infty}\left(1-\mathrm{e}^{\epsilon-\gamma}\right)\left(\Pr_{\bm \gamma} \ast^k \Pr_{\bm \gamma}\right)(\gamma) \mathrm{d} \gamma,
\end{aligned}
\end{equation}
where $\delta^k(\epsilon)$ denotes the post-composition privacy profile, and $\ast^k$ denotes the $k$ times convolution.
According to \eqref{eq:inf_delta}, it is straightforward that,
\begin{equation}\label{eq: k-fold}
\begin{aligned}
    \delta^k(\epsilon) &= 1- \left(1-\frac{1}{\Comb{2\tau}{\tau}}\right)^k
    + \sum_{\gamma:\epsilon<\gamma<\infty}(1-e^{\epsilon-\gamma})\Pr_{\bm \gamma}(\gamma)^{\circledast k},\\
\end{aligned}
\end{equation}
where $(\cdot)^{\circledast k} $ denotes the $k$-fold convolution. 
Note that directly solving $\Pr_{\bm \gamma}(\gamma)^{\circledast k} $ exhibits exponentially computation complexity w.r.t. $k$. 
Following similar steps as presented in \cite{koskela2020tight}, we present the two-party DP composition accounting algorithm based on FFT in Algorithm~\ref{alg:bisection}, subsequently, we outline steps to numerically derive $\tau$ that guarantees $(\epsilon,\delta)$-DP after $k$-fold composition in Algorithm~\ref{alg:bisection}.












\begin{algorithm}[t]
\SetKwInOut{Input}{Input}
\SetKwInOut{Output}{Output}
\Input{
Total DP budget $(\epsilon, \delta)$;
Total numbers of executions: $k$;
}

\Output{The number of dummies added for DP: $\tau$.}

\begin{footnotesize}
\begin{algorithmic}[2] 
\PROCEDURE{Estimate-$\delta$}{$\tau, k, \epsilon, n_x$}
\STATE FFT window: $W = (4k - 2)\ln{\tau}$.
\STATE FFT resolution: $\Delta_x = 2W/n_x$.

\STATE Initialize PLD vector as an all zeros vector $P_{\Gamma}\in \mathbb R^{n_x}$.

\FOR{$o\in [0, \dots, \tau-1]$:}

\STATE Update:  $
    P_{\Gamma}\left[\lceil (W + 2\ln{\Comb{\tau}{o}} - 2\ln{\Comb{\tau}{o+1}})/\Delta_x\rceil\right]={(\Comb{\tau}{o})^2}/{\Comb{2\tau}{\tau}} \nonumber
$

\ENDFOR

\STATE Compute the convolutions via FFT: $
 b = D\mathcal{F}^{-1}\left(\mathcal{F}(D\cdot P_{\Gamma})^{k}\right)
,$ where
$D = \begin{bmatrix}
0 & I_{n_x/2} \\
I_{n_x/2} & 0 
\end{bmatrix}$.\\

\STATE Compute the starting point:
$\gamma_{\epsilon}=\left\lceil\frac{\epsilon+W}{\Delta_x}\right\rceil. \nonumber
$

\STATE Return $\delta' = 1- \left(1-\frac{1}{\Comb{2\tau}{\tau}}\right)^k + \sum_{\gamma=\gamma_{\epsilon}}^{n_x}(1-e^{\epsilon+W-\gamma\Delta_x})b[\gamma]$.
\ENDPROCEDURE
\end{algorithmic}
\end{footnotesize}

\begin{algorithmic}[1]
\STATE Set $\tau_{\rm lo} = 0$ 
\STATE Increase $\tau_{\rm hi} = 0, 1, \cdots, $ until $\delta \le \textsc{Estimate-$\delta$}(\tau_{\rm hi}, k, \epsilon, n_x)$.
\WHILE{$\tau_{\rm lo} < \tau_{\rm hi}$}
\STATE Set $\tau' = \ceil{(\tau_{\rm lo} + \tau_{\rm hi})/2}$.
\STATE $\delta' \leftarrow \textsc{Estimate-$\delta$}(\tau', m, \epsilon, W, n_x)$.
\STATE Set $\tau_{\rm lo} \leftarrow \tau'$ if $\delta' \ge \delta$. Otherwise set $\tau_{\rm hi} \leftarrow \tau'$ if $\delta' < \delta$.
\ENDWHILE
\STATE Output $\tau_{\rm hi}$.
\end{algorithmic}

\caption{Find the minimal $\tau$ for multiple executions.\label{alg:bisection}}
\end{algorithm}

FFT-based composition for privacy loss involves representing the distribution as a sparse vector, marking specific locations indicating data leakage and their probabilities. This vector is transformed into the frequency domain using FFT, followed by multiplication operations to handle convolution. Matrix expressions simplify these complex operations into linear transformations and power computations. The processed sequence is then reverted to the time domain using the inverse FFT transformation.

FFT significantly improves the computation cost  from $\mathcal{O}(n^2)$ to $\mathcal{O}(n\log(n))$, where $n$ denotes the length of $\Pr_{\bm \gamma}(\gamma)^{\circledast k}  $. However, FFT requires the input sequence to be periodic. Further, the input sequence should be cast to a periodical window with size, say $W$, no less than the length of FFT result to avoid information loss due to spectrum overlap. 
To this end, we proved the following property with details in Appendix, and it is used to set the parameter $W$ required in our waterfall matching protocol.


\begin{property}(Minimal FFT window)
\label{prop: window}
The minimal window size for FFT to guarantee no spectrum overlap is
$W \ge (4K-2)\ln\tau$.
\end{property}

Further, casting continuously-valued leakages to discrete vector locations incurs information loss. The discretization is handled by the resolution factor $n_x$ that represents the total number of entries on the input sequence vector. Intuitively, larger $n_x$ leads to high input resolution, and thereby reduces information loss. However, large $n_x$ inevitably increases the computation complexity. 
By examining line 6 of the \textsc{Estimate-$\delta$} procedure in Algorithm~\ref{alg:bisection}, this line projects the continuously valued leakage onto the closest discretized location to its right. In essence, this means that the leakage is consistently amplified or increased, thereby ensuring that the calculated $\delta'$ is always greater than or equal to the true $\delta$ that accurately reflects the exact leakage. 

In summary, for a given target $(\epsilon, \delta)$, one can calculate the parameter $\tau$ using Algorithm~\ref{alg:bisection}.




\section{Experiments}\label{sec:experiments}

\begin{table}[t]
\caption{
The minimum dummies size $\tau$ is a function $\tau = \tau(n, m, \epsilon, \delta, k)$ of 
$n$ the input size $n$, $m$ the number of IDs,
$(\epsilon, \delta)$ the DP profile, and $k$ the maximum sequential matching.
The below are fixed by $m = 3$.
\label{tbl:dummy_size}
}
\centering
\begin{tabular}{ccccc|ccccc}\toprule
\multicolumn{5}{c}{$k = 1, \epsilon = 1, \delta = 1/(10n)$} & \multicolumn{5}{c}{$k = 6, \epsilon = 1, \delta = 1/(10n)$} \\
$n$ & $10^4$ & $10^5$ & $10^6$ & $10^7$ & $n$ & $10^4$ & $10^5$ & $10^6$ & $10^7$ \\
$\tau$ & 114 & 141 & 170 & 201 & $\tau$ &  285 & 353 & 425 & 503 \\
\midrule
\multicolumn{5}{c}{$k = 1, \epsilon = 2, \delta = 1/(10n)$} & \multicolumn{5}{c}{$k = 6, \epsilon = 2, \delta = 1/(10n)$} \\
$n$  & $10^4$ & $10^5$ & $10^6$ & $10^7$ & $n$  & $10^4$ & $10^5$ & $10^6$ & $10^7$ \\
$\tau$& 77 & 96 & 116 & 137 & $\tau$ & 191& 239 & 289 & 340 \\
\bottomrule
\end{tabular}
\end{table}


\noindent{\bf Testbed.} 
All computational cost measurements for our protocols are in terms of total wall-clock runtime for both parties, running 
on cloud instances with an 
Intel(R) Xeon(R) Platinum 8336C CPU (2.30GHz) and 64GB of RAM.
Communication cost includes the inbound and out-bound traffice of the two parties.
We consider three network conditions, including LAN (10Gbps, 0.2ms ping), MAN (1Gbps, 2ms ping), and WAN (100Mbps, 20ms ping).

\noindent{\bf Synthetic Data.}
We consider balanced databases $|X| = |Y| = n$ for $10^3 \le n \le 10^7$ with 
the number of IDs $m = 2, 3$.
The total intersection size is set as $|X\cap Y| = 0.02n$.
The payloads are at most 32 bits long. 

\noindent {\bf Implementations of $\PRF$.}
For the DDH-based construction, we use OpenSSL's implementation ``prime256v1'', a NIST elliptic curve 
with 256-bit group elements, as the group $\GG$. 
For the random oracle, we use SHA-256 and apply the ``try-and-increment'' method to map bit-string to group elements~\cite{DBLP:conf/crypto/BonehF01}.
We mention a common trick for the DDH-based construction.
For the update command, we can truncate the updated group element (\ie, $b'_j$ in Step~4 in Figure~\ref{fig:DDH_OPRF}) 
to save some communication. For instance, we only send the least significant $\gamma$ bits of $b'_j$.
This is suffice for the private matching function given that the 
the least significant bits of a random element of $\GG$ are indistinguishable from a uniform bit-string~\cite{DBLP:conf/eurocrypt/ChevalierFPZ09}.
Particularly, we set $\gamma = 96$.
For the GC-based construction, we use the EMP-toolkit~\cite{emp-toolkit} and set the AES block size $\lambda = 128$.
For the random oracle, we use the BLAKE2b hash function to produce 128-bit digests.
For the AHE, we use the SEAL library~\cite{sealcrypto} for the BFV scheme with the AVX512 acceleration~\cite{IntelHEXL}.
Specifically, we use BFV with an 118-bit ciphertext modulus and a 64-bit plaintext modulus, with 8192 coefficients per ciphertext.
Each ciphertext is 236 KB\footnote{We can leverage the symmetric version of BFV to reduce the size to 118KB.} and can hold up to 8192 payloads of 64-bit each.

\subsection{Enhancing Privacy via DP}

\begin{figure}[t]
\centering
\subfigure[
Revealing the intersection sizes in the multi-ID setting are more susceptible to membership inference attacks. 
  \label{fig:single_vs_mult_id_with_DP}
  ]{
   \includegraphics[width=0.45\linewidth]{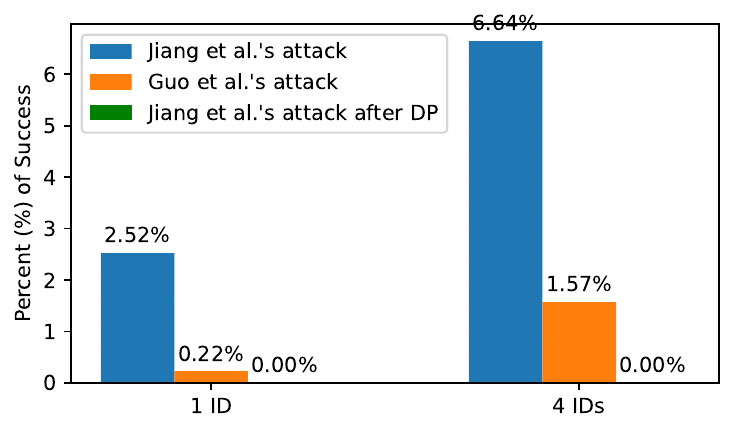}
  }
  ~
  \subfigure[
  The parallel composition introduces less dummies
  than the sequential composition under the profile ($\epsilon = 1.0, \delta = 10^{-6}$).
  \label{fig:linear_parallel}
  ]{
  \includegraphics[width=0.45\linewidth]{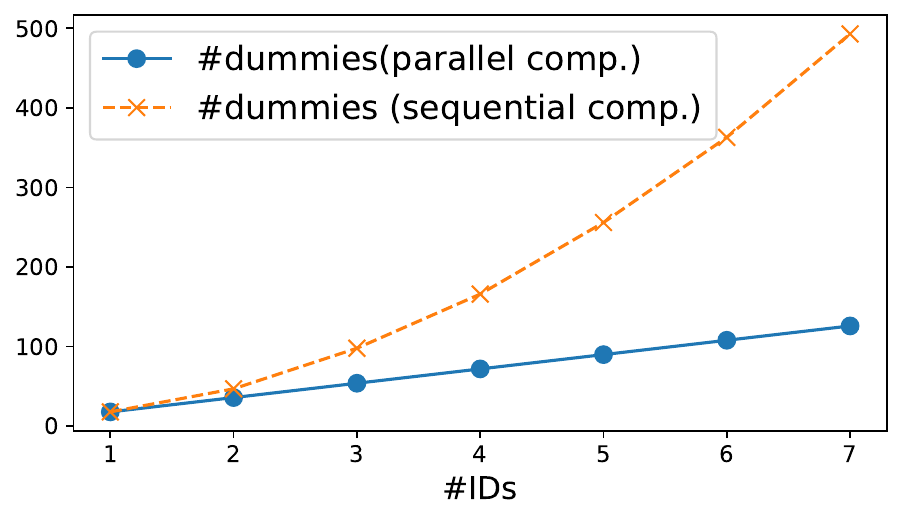}
  }
  \caption{(a) The robustness of the DP-enhanced protocol against the two membership inference attacks. 
           (b) The parallel composition introduces less dummies.
           }
\end{figure}

Table~\ref{tbl:dummy_size} shows the corresponding dummy sizes under different input settings according to Algorithm~\ref{alg:bisection}.
Briefly, the dummy size is significantly smaller than the input size by virtue of the parallel decomposition and FFT technique applied.
In Figure~\ref{fig:single_vs_mult_id_with_DP}, we demonstrate the robustness of the DP-enhanced protocol against two membership inference attacks. 
Specifically, it illustrates that the membership leakage after $k=20$ executions of a size-revealing private matching protocol can be up to 6\% using recent attacks~\cite{Guo2022birds, jiang2022comprehensive}. However, after applying our DP protection for the intersection size, we do not observe any successful inferences using these attacks.
Figure~\ref{fig:linear_parallel} depicts the effectiveness of the parallel decomposition for introducing less number of dummies.
This also translates to the FFT-based composition technique used for multiple executions of the waterfall matching.


\subsection{Micro-benchmarks}

Table~\ref{tbl:mirco_prf} depicts the micro-benchmarks for the two constructions of $\PRF$. 
Specifically, the DDH-based construction is more computation intensive but has significantly less communication overhead. 
On the other hand, the GC-based construction introduces a significantly larger communication overhead.

\noindent{\textbf{A Note on PSI Based on Symmetric Primitives}}
It is often perceived that the DDH-based approach may be less efficient than conventional symmetric-primitive-based PSI protocols. However, existing experimental results challenge this misconception. For instance, Peter et al. reported a circuit-based PSI implementation (single identifier) requiring 8 CPU cores and 103 seconds (277 MB memory) for a dataset of  $n=10^6$ records~\cite[Table 4]{rindal2021vole}. 
This demonstrates  $2\times$ higher CPU workloads and  $3.7\times$ greater communication overhead compared to our DDH-based method.

\begin{table}[t]
\caption{Benchmarks of $\PRF$\label{tbl:mirco_prf}. Single threaded. The time was measured under the LAN setting.}
  \centering
  \begin{footnotesize}
  \begin{tabular}{c rr|rr} \toprule
    \multicolumn{5}{c}{\underline{Send Command}}\\
     & \multicolumn{2}{c}{\underline{\hspace{5pt}DDH\hspace{5pt}}} 
     &  \multicolumn{2}{c}{\underline{\hspace{5pt}GC\hspace{5pt}}}\\
    {Input $n$} & Time & Commu. & Time & Comm.\\
       \midrule
       $10^3$ & 0.19s&  32.23KB   & 0.52s & 209.74MB \\
       $10^4$ & 1.88s&  322.26KB  & 5.21s & 2.05GB  \\
       $10^5$ & 18.56s&  3.15MB   & 51.23s& 20.46GB \\
       $10^6$ & 185.61s&  31.52MB &512.49s& 204.56GB \\
       \midrule
       \midrule
    \multicolumn{5}{c}{\underline{Update Command}}\\
        & \multicolumn{2}{c}{\underline{\hspace{5pt}DDH\hspace{5pt}}} 
        &  \multicolumn{2}{c}{\underline{\hspace{5pt}GC\hspace{5pt}}}\\
    {Input $n$} & Time & Commu. & Time & Comm. \\
       \midrule
       $10^3$ & 0.16s&  43.95KB   & 0.52s & 209.74MB\\
       $10^4$ & 1.62s&  439.45KB  & 5.21s & 2.05GB  \\ 
       $10^5$ & 16.48s&  4.29MB   & 51.23s& 20.46GB \\
       $10^6$ & 163.06s&  41.92MB & 512.49s& 204.56GB  \\
      \bottomrule
      \multicolumn{5}{l}{$^\dagger$ 1GB = $2^10$ MB = $2^{20}$ KB = $2^{30}$ bytes}
  \end{tabular}
  \end{footnotesize}
\end{table}

\begin{table}[t]
\caption{Benchmarks the proposed waterfall matching protocol using $\Prot{\PRF}^{\sf DDH}$.
8 threads were used for each party. \label{tbl:bench_waterfall}}
\centering

\begin{tabular}{crrrr}\toprule
   \multicolumn{5}{c}{\underline{Matching for Sum} (\#IDs $m = 2$)} \\
  \multicolumn{1}{c|}{$n$} & Commu. & LAN & MAN & WAN  \\ \midrule
  $10^4$ & 2.91MB& 1.89s& 2.03s& 3.51s \\
  $10^5$ & 26.61MB & 6.93s& 7.22s& 10.54s \\
  $10^6$ & 263.1MB & 68.01s & 69.86s & 88.18s \\
  $10^7$ & 2631MB & 697s& 719s& 911s \\
\midrule
  & \multicolumn{4}{c}{\underline{Matching for Share} (\#IDs $m = 3$)} \\
  \multicolumn{1}{c|}{$n$} & Comm. & LAN & MAN & WAN  \\ \midrule
  $10^4$  &4.14MB& 1.42s & 1.48s& 3.58s \\
  $10^5$  &41.36MB& 11.29s& 11.41s& 15.44s \\
  $10^6$  &410.64MB & 106.68s & 109.71s & 139.35s \\
  $10^7$  &4102MB& 1095s& 1121s& 1430s \\
  \midrule
  \midrule
  
   \multicolumn{5}{c}{\underline{Matching for Share} (\#IDs $m = 2$)} \\
  \multicolumn{1}{c|}{$n$} & Comm. & LAN & MAN & WAN  \\ \midrule
   $10^4$        & 3.16MB& 0.95s& 1.18s& 2.63s \\
   $10^5$        & 26.85MB&  6.93s&  7.73s& 10.98s \\
   $10^6$        & 237.19MB& 72.46s& 74.17s& 94.09s \\
   $10^7$        & 2637MB& 732s& 742s& 940s\\
\midrule
  & \multicolumn{4}{c}{\underline{Matching for Share} (\#IDs $m = 3$) } \\
  \multicolumn{1}{c|}{$n$} & Comm. & LAN & MAN & WAN  \\ \midrule
   $10^4$        & 4.64MB& 2.36s& 2.16s& 3.76s\\
   $10^5$        & 41.59MB& 11.99s& 12.34s& 16.69s\\
   $10^6$        & 411.36MB& 110.02s& 114.28s& 145.79s\\
   $10^7$        & 4109MB& 1125s& 1153s& 1456s\\
  \bottomrule
\end{tabular}
\end{table}


\subsection{Benchmark the Private Waterfall Matching}

Table~\ref{tbl:bench_waterfall} shows the performance of our waterfall matching protocol under different network conditions.
From the results, we see that the Matching-for-Sum and the Matching-for-Share variant do not change much.
That is because the $\PRF$ part (\ie, DDH-based $\PRF$) dominates the computation and communication.
The AHE part only takes a very small portion of the whole protocol for two reasons.
First, the BFV is very efficient for encrypting a long vector, \ie, million records per second.
Second, the intersection size is significantly smaller than the input size, leading a relatively smaller  workload for the AHE, \eg, addition and decryption.
Overall, our waterfall matching protocol is practical to handle large-scale input sets.

\subsection{Comparison with Existing Solutions}
Figure~\ref{fig:benchmark_other_method} shows a comparison with two existing approaches. 
To the best of our knowledge, none of the existing approaches fully realize waterfall matching without cross-ID leakages. 
This is because most of them are designed for the single ID setting.

\subsubsection{Private-Join-then-Sum (PJS)~\cite{google-psi}} 
Ion \etal present a matching-for-sum protocol using hashed DDH and the Paillier AHE scheme. 
Their approach works for a single ID and reveals the exact intersection size. 
We re-ran their implementation\footnote{\url{https://github.com/google/private-join-and-compute}} in our environment. 
Due to the superiority of the BFV scheme over the Paillier scheme, our approach is about $20\times$ faster, with approximately one-tenth the communication overhead compared to their approach.

\subsubsection{P$\text{S}^3$I~\cite{PS3I}, a matching-for-share protocol with payloads from both sides}
We also perform the benchmark under the single-ID setting with two payload columns, and we compare the performance 
using their implementation\footnote{\url{https://github.com/facebookresearch/Private-ID}}.
In general, our protocol is about 2 order of magnitude faster than  P$\text{S}^3$I with 85\% less communication. 
This improvement is a result of multiple factors.  P$\text{S}^3$I~\cite{PS3I} utilizes the Paillier scheme which is less efficient than the BFV scheme we used.
Also, P$\text{S}^3$I~\cite{PS3I} transfer all encrypted payloads while our approach only transfer the encrypted payload within the intersection.


\section*{Conclusions}
In this paper, we propose a practical and privacy-preserving solution to the multidimensional matching problem. By integrating a novel two-party distributed differential privacy mechanism with a specially tailored matching protocol, our approach effectively mitigates membership inference leakage. Our design satisfies the practical requirements for handling multiple identifiers while minimizing the overhead of additional dummy elements needed for privacy protection. This combination of robust privacy guarantees and computational efficiency represents a significant advancement in privacy-preserving data matching, offering both theoretical rigour and practical applicability for real-world deployment.

\begin{figure}
    \centering
    \includegraphics[width=0.90\linewidth]{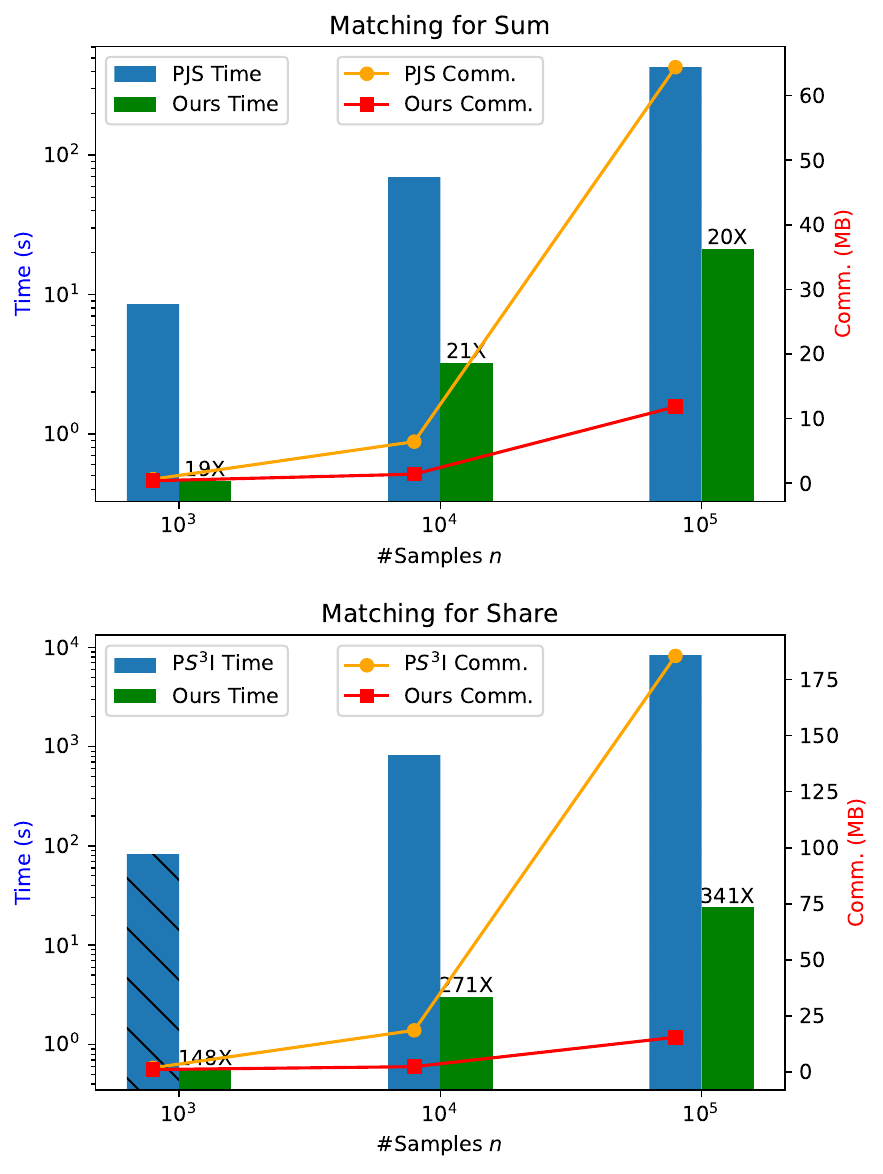}
    \caption{Comparison with the PJS~\cite{google-psi} protocol and the P$\text{S}^3$I~\cite{PS3I} protocol.
For our DDH-based protocol, we configure the DP profile with the parameters $(k = 6, m = 3, \epsilon = 2, \delta = 1/(10n))$.
The experiments were conducted under the LAN network setting, utilizing one single thread per party. \label{fig:benchmark_other_method}
    }
\end{figure}

\bibliographystyle{IEEEtran}
\bibliography{Jian}

\appendix

\section{Appendix} \label{sec:appendix_dp_proof}
\subsection{Proofs}

\subsubsection{Proof of Lemma~\ref{lemma:set_cdn_pmf}} \label{app:pmf}

\begin{proof}
We denote $D'_A$ and $D'_B$ the sampled set in Algorithm~\ref{alg:DP-SI-SingleID} by $\PA$ and $\PB$, respectively.
According to Algorithm~\ref{alg:DP-SI-SingleID}, it is established that both  $D'_A$ and $D'_B$ are independent of both $A$ and $B$. 
Consequently, it becomes evident that $|\widetilde{A}\cap \widetilde{B}| =|A\cap B|+ z$ where $z = |D'_A \cap D'_B|$.
To calculate $\Pr(z = z')$, we need to determine the number of outcomes where the intersection size of $D'_A \cap D'_B$ is $z$ and 
then divide by the total number of possible outcomes. 
There are $\Comb{2\tau}{\tau}$ possible outcomes for both $D'_A$ and $D'_B$ since there are random subsets of a dummy set with $2\tau$ elements.
Therefore, the total number of outcomes is $\Comb{2\tau}{\tau} \cdot \Comb{2\tau}{\tau}$.

The number of outcomes for $D'_B$ alone is $\Comb{2\tau}{\tau}$. 
For any $|D'_A \cap D'_B| = z$, it implies that there are $z$ elements in $D'_B$ belonging to $D'_A$, with $\Comb{\tau}{z}$ possible outcomes. 
Simultaneously, the remaining $\tau - z$ elements in $D'_B$ must be randomly selected from $D \setminus D'_A$, offering $\Comb{\tau}{\tau - z}$ possibilities. 
Consequently, we can compute the PMF of $z = |D'_A \cap D'_B|$ as follows:
$$
\label{eq:PMF}
\Pr(z= z') =  \frac{\Comb{2\tau}{\tau} \cdot \Comb{\tau}{z'} \cdot \Comb{\tau}{\tau - z'} }{\Comb{2\tau}{\tau} \cdot\Comb{2\tau}{\tau} }= \frac{(\Comb{\tau}{z'})^2}{\Comb{2\tau}{\tau}}.
$$
\end{proof}

\subsubsection{Proof of Property~\ref{prop:symmetric}} \label{app:one-delta}
\begin{proof}
Denote {$\{0, \dots, \tau\}$} as the support of $z$, the support of $\mathcal{M}(A)$ becomes: $\{|A\cap B|,...,|A\cap B|+\tau\}$.
Given the asymmetrically distributed property, we have $\Pr(z = \tau-o) = \Pr(z = o)$, $\forall{o\ in[0, 1, \dots, \tau]}$. 
Given the sensitivity of the operation of  $|A\cap B|$ is $1$, 
The worst-case neighboring dataset $\mathcal{M}(A^\prime)$ has the support of $\{|A\cap B|+1, \dots, |A\cap B| + \tau + 1\}$, and the mapping relationships of the two neighboring datasets become $\Pr(\mathcal{M}(A)=o) = \Pr(\mathcal{M}(A^\prime) = o+1)$ .
Define ${O}$ as the support of the output of the mechanism, ${O}^+$ as the subset of ${O}$ corresponds to the criterion:
\[
O^+ = \{o: \Pr(\mathcal M(A)=o)-e^{\varepsilon} \Pr(\mathcal M(A^\prime)=o)\ge 0 \}. 
\]
Then, according to the definition of DP, the failure probability $\delta$ can be expressed as:
\begin{equation*}\label{eq:delta_expression}
\begin{aligned}
    \delta=&\sum_{o \in O} \max \left(\Pr(\mathcal M(A)=o)-e^{\varepsilon} \Pr(\mathcal M(A^\prime)=o), 0\right)\\
    =&\sum_{o\in{O}^+}\left(\Pr(\mathcal M(A)=o)-e^{\varepsilon} \Pr(\mathcal M(A^\prime)=o\right)\\
    =&\sum_{o\in{O}^+}\left(1-e^{\varepsilon}\frac{\Pr(\mathcal M(A^\prime)=o)}{\Pr(\mathcal M(A)=o)}\right)\Pr(\mathcal M(A)=o),
    \end{aligned}
\end{equation*}
This shows how \eqref{eq:dp_profile} is derived.
For the first term in \eqref{eq:dp_profile}
\begin{equation}
\begin{aligned}
\delta_{AA'}(\infty)=&\sum_{\substack{\{o|\Pr(\mathcal{M}(A)=o)>0,\\ \Pr(\mathcal{M}(A^\prime)=o)=0\}}}\Pr(\mathcal{M}(A)=o)\\
=&\sum_{o=|A\cap B|}^{|A\cap B|}\Pr(\mathcal{M}(A)=o) \\
=&\Pr(\mathcal{M}(A)=|A\cap B|)
\end{aligned}
\end{equation}
For the other direction,
\begin{equation}
\begin{aligned}
\delta_{A'A}(\infty)=&\sum_{\substack{\{o|\Pr(\mathcal{M}(A^\prime)=o)>0,\\ \Pr(\mathcal{M}(A)=o)=0\}}}\Pr(\mathcal{M}(A^\prime)=o)\\
=&\sum_{o=|A\cap B|+\tau}^{|A\cap B|+\tau}\Pr(\mathcal{M}(A)=o)\\
=&\Pr(\mathcal{M}(A)=|A\cap B| + \tau)\\
=&\Pr(\mathcal{M}(A)=|A\cap B|)\\
\end{aligned}
\end{equation}
which means $\delta_{AA'}(\infty) = \delta_{A'A}(\infty)$.

For the second part of \eqref{eq:dp_profile}, first, let $\bm \gamma_{AA'}(o)$ and $\bm \gamma_{A'A}(o)$ be:
\begin{equation*}
\begin{aligned}
&\bm \gamma_{AA'}(o) = \ln\left(\frac{\Pr(\mathcal{M}(A)=o)}{\Pr(\mathcal{M}(A^\prime)=o)}\right),\\
&\bm \gamma_{A'A}(o) = \ln\left(\frac{\Pr(\mathcal{M}(A^\prime)=o)}{\Pr(\mathcal{M}(A)=o)}\right)
\end{aligned}
\end{equation*}
Then, $\bm \gamma_{AA'}(o)=-\bm \gamma_{A'A}(o)$, and by the symmetric property, $\bm \gamma_{AA'}(o) = \bm \gamma_{A'A}(\tau+1-o)$. Then,
the second part $\delta(\bm \gamma>\epsilon)$ for each term can be expressed as:

\figbox{Simulator $\SIM_A$}{

{\bf Evaluation:} $\PA$ does not receive message in this stage, and thus $\SIM_A$ does nothing.

{\bf Update: } 
\begin{itemize}
  \item $\SIM_A$ samples $\tilde{b}_j \leftarrow \GG$ for $j = 1, 2, \cdots, n'$.
  \item $\PA \leftarrow \SIM_A : \{ \tilde{b}_j \}_{j=1}^{n'}$ as $\PA$'s received message in Step (5) in $\Prot{\PRF}^{\sf DH}$.
\end{itemize}
}{The simulator for $\PA$ in the $\Prot{\PRF}^{\sf DH}$ protocol \label{fig:sim_A}.}

\figbox{Simulator $\SIM_B$}{

{\bf Evaluation:} 
\begin{itemize}
   \item $\SIM_B$ samples $\tilde{a}_i \leftarrow \GG$ for $i = 1, 2, \cdots, n$.
   \item $\SIM_B \rightarrow \PB: \{\tilde{a}_i \}_{i=1}^{n}$ as $\PB$'s received messages in Step (2).
\end{itemize}

{\bf Update: } 
\begin{itemize}
   \item $\SIM_B \leftarrow \PB:$ $\{b_j \in \GG\}_{j=1}^{n'}$ where $b_j$s are chosen by $\PB$.
   \item $\SIM_B \rightarrow \PB:$ $\{b_j^r \}_{j=1}^{n'}$ using random $r \leftarrow \ZZ_q$ as $\PB$'s received messages in Step (7).
\end{itemize}
}{The simulator for $\PB$ in the $\Prot{\PRF}^{\sf DH}$ protocol. \label{fig:sim_B}}

\begin{equation*}
    \delta(\bm \gamma_{AA'}>\epsilon)=\sum_{o = 1}^{\bm \gamma_{AA'}^{-1}(\epsilon)}(1-e^{\epsilon-\bm \gamma_{AA'}(o)})\Pr(\mathcal{M}(A) = o)
\end{equation*}
and 
\begin{equation*}
\begin{aligned}
&\delta(\bm \gamma_{A'A}>\epsilon)\\
=&\sum_{o = \bm \gamma_{A'A}^{-1}(\epsilon)}^{\tau}(1-e^{\epsilon-\bm \gamma_{A'A}(o)})\Pr(\mathcal{M}(A^{\prime}) = o)\\
=&\sum_{o = 1}^{\bm \gamma_{AA'}^{-1}(\epsilon)}(1-e^{\epsilon-\bm \gamma_{AA'}(o)})\Pr(\mathcal{M}(A) = o).
\end{aligned}
\end{equation*}
Therefore, $\delta(\bm \gamma_{AA'}>\epsilon)=\delta(\bm \gamma_{A'A}>\epsilon)$. 
\end{proof}

\subsubsection{Proof of Property~\ref{prop: window}}
\label{app:window_len}
\begin{proof}
$W$ is lower bounded by the maximum leakage after $k$-fold convolution. From \eqref{eq:PLD1}, the maximum leakage of the mechanism can be expressed as
\begin{equation*}
2\ln\left(\max_{z\in[0,\tau-1]} \frac{z+1}{\tau-z}\right)=2\ln(\tau)    
\end{equation*}
The length of the PLD is:
\begin{equation*}
2\ln\left(\max_{z\in[0,\tau-1]} \frac{z+1}{\tau-z}\right)-2\ln\left(\min_{z\in[0,\tau-1]} \frac{z+1}{\tau-z}\right)=4\ln{\tau}
\end{equation*}
Then, the maximum leakage after $k$-fold composition becomes:
\begin{equation*}
2\ln(\tau)+4(k-1)\ln(\tau)=(4k-2)\ln(\tau)
\end{equation*}
This completes the proof.
\end{proof}

\begin{figure*}
\centering
\includegraphics[width=\textwidth]{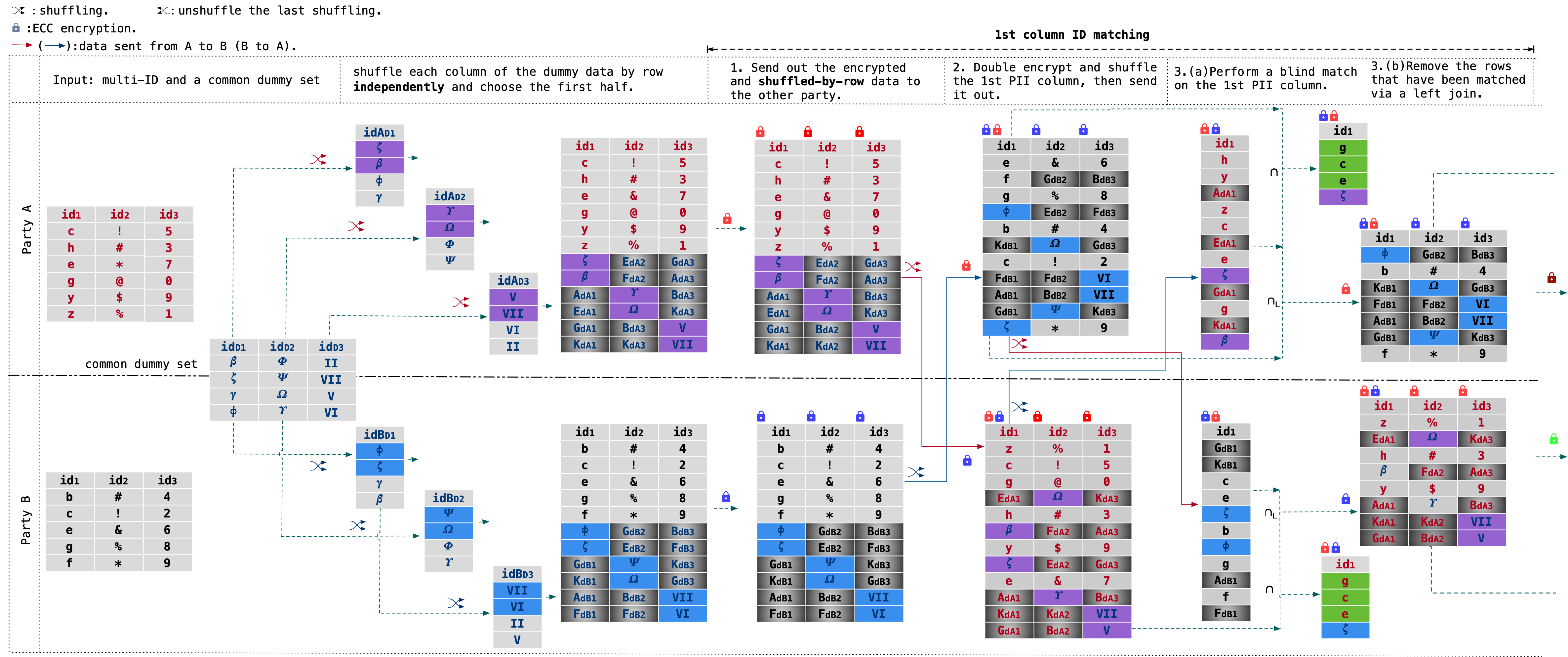}
\includegraphics[width=\textwidth]{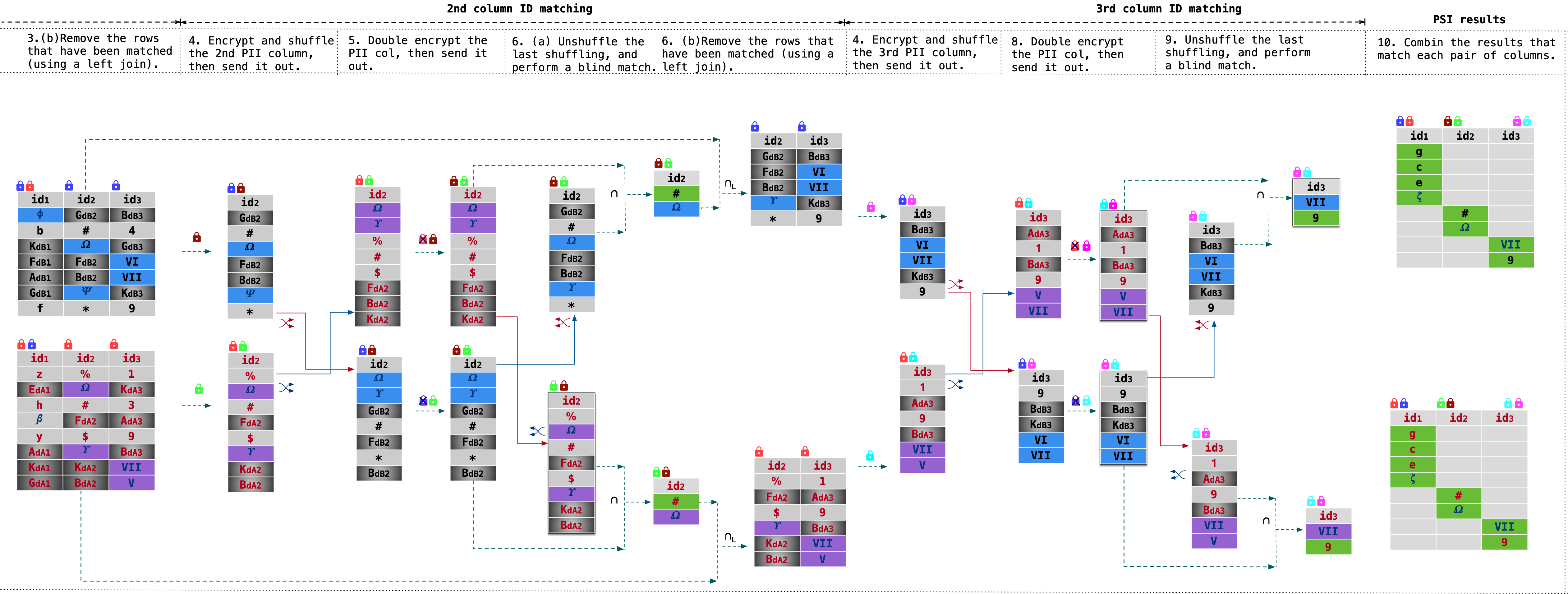}
\caption{An illustration of the DDH-based Waterfall Matching.}
\end{figure*}

\begin{figure*}
\centering
\fbox{
  \procedure[skipfirstln,linenumbering,codesize=\footnotesize]
  {Private Waterfall Matching and Sum Protocol $\Prot{wfm}$} 
  {
  \\
  \PA{}: \text{identifiers }\ID^1_A, \ID^2_A, \cdots, \ID^m_A \<\< \PB{}: \text{identifiers } \ID^1_B, \ID^2_B, \ID^m_B, \text{ and payloads } T
  \pclb \pcintertext[dotted]{Stage-1: Setups}
  \keyA{1}, \keyA{b}, \hat{k}^b_A \usample \mathbb{Z}_q \mbox{ for } b = 2, 3, \cdots, m\< \< 
  \keyB{1}, \keyB{b}, \hat{k}^b_B \usample \mathbb{Z}_q \mbox{ for } b = 2, 3, \cdots, m \\
  \F{\PRF}.\text{\bf init}(\sid_b, \keyA{b})\text{ for } b = 1, \cdots, m \< \< \F{\PRF}.\text{\bf init}(\sid_b, \keyB{b}) \text{ for } b = 1, \cdots, m\\
  \F{\PRF}.\text{\bf init}(\hat{\sid}_b, \hat{k}^b_A)\text{ for } b = 2, \cdots, m \< \< \F{\PRF}.\text{\bf init}(\hat{\sid}_b, \hat{k}^b_B) \text{ for } b = 2, \cdots, m\\
  \<  \< (\sk, \pk) \leftarrow \AGen(1^\lambda) \\
  \< \sendmessageleft*{\pk, C} \< C = \{ \AEnc(T[j]) \}_{j=1}^{|T|}
  \pclb
  \pcintertext[dotted]{Stage-2: PRF on all identifiers}
  \F{\PRF}.\text{\bf Send}(\sid_b, {\ID}^b_A) \text{ for } b = 1, \cdots, m \< \< {\sf Tag}^b_A \leftarrow \F{\PRF}.\text{\bf Recv}(\sid_b)\text{ for }b = 1, \cdots, m\\
  {\sf Tag}^b_B \leftarrow \F{\PRF}.\text{\bf Recv}(\sid_b)\text{ for }b = 1, \cdots, m \< \< \F{\PRF}.\text{\bf Send}(\sid_b, {\ID}^b_B)\text{ for }b = 1, \cdots, m
  \pclb \pcintertext[dotted]{Stage-3: Matching on the first identifier}
  \< \sendmessageleft*{\widetilde{{\sf Tag}}^1_A \text{ in shuffled order}} \<  \\
  \< \sendmessageright*{ \widetilde{{\sf Tag}}^1_{B} \text{ in shuffled order}} \< \\
  \mbox{Matches: } {J}^1_{B} \leftarrow \{ j: {\sf Tag}^1_{B}[j] \in \widetilde{{\sf Tag}}^1_{A} \}
  \<\< {J}^1_{A} \leftarrow \{j: {\sf Tag}^1_{A}[j] \in \widetilde{\sf Tag}^1_{B} \}\\
  \mbox{Mismatches: }{D}^1_B \leftarrow \{1, 2, \cdots, |{\sf Tag}^1_B|\} \setminus {J}^1_B \<\<  {D}^1_A \leftarrow \{1, 2, \cdots, |{\sf Tag}^1_A|\} \setminus {J}^1_A \\
  \mbox{Book-keep: } \sigma^1_B[j] \leftarrow {D}^1_B[j] \mbox{ for } 1\le j \le |{D}^1_B| \< \< 
  \sigma^1_A[j] \leftarrow {D}^1_A[j] \mbox{ for } 1\le j \le |{D}^1_A| 
  \pclb
  \pcintertext[dotted]{Stage-4: Key-update then matching on the $b$-th identifier ($b = 2, \cdots, m)$} 
  \text{Keep the unmatched records } {\sf rTag}^b_B \leftarrow \{{\sf Tag}^b_B[i] \}_{i \in {D}^{b-1}_B} \< \< {\sf rTag}^b_A \leftarrow \{{\sf Tag}^b_A[i] \}_{i \in {D}^{b-1}_A} \\
  \text{(Skip this if $b = m$) } {\sf rTag}^b_B \leftarrow \F{\PRF}.{\bf Update}(\sid_b, \hat{\sid}_b, {\sf rTag}^b_B) \< \< 
  \text{(Skip this if $b = m$) } \F{\PRF}.{\bf Assit}(\sid_b, \hat{\sid}_b)\\
  \F{\PRF}.{\bf Assit}(\sid_b, \hat{\sid}_b) \< \< {\sf rTag}^b_A \leftarrow \F{\PRF}.{\bf Update}(\sid_b, \hat{\sid}_b, {\sf rTag}^b_A) \\
  \< \sendmessageleft*{\widetilde{{\sf rTag}}^b_{A} \text{ in shuffled order}} \\
  \text{(Skip this line if $b = m$) } \< \sendmessageright*{\widetilde{{\sf rTag}}^b_{B} \text{ in shuffled order}} \\
  \text{Matches: } {J}^b_{B} \leftarrow \{ \sigma^{b-1}_B[j]: {\sf rTag}^b_{B}[j] \in \widetilde{{\sf rTag}}^b_{A} \} \< \< {J}^b_{A} \leftarrow \{ \sigma^{b-1}_A[j]: {\sf rTag}^b_{A}[j] \in \widetilde{{\sf rTag}}^b_{B} \} \\
  \mbox{Mismatches: }{D}^b_B \leftarrow D^{b-1}_B \setminus{J^b_B} \<\<  {D}^b_A \leftarrow D^{b-1}_A \setminus {J}^b_A \\
  \mbox{Book-keep: } \sigma^b_B[j] \leftarrow {D}^b_B[j] \mbox{ for } 1\le j \le |{D}^b_B| \< \< 
  \sigma^b_A[j] \leftarrow {D}^b_A[j] \mbox{ for } 1\le j \le |{D}^b_A| 
\pclb \pcintertext[dotted]{Final stage: aggregation}
  \mbox{Sum }{\sf ct} \leftarrow {\ASum}(\{ \widetilde{C}[j] \}_{j \in {J}^1_B \cup \cdots \cup {J}^b_B}) \<\sendmessageright*{{\sf ct}' \leftarrow {\AFresh}({\sf ct})}\< \text{Output } \ADec({\sf ct}'). }
}
\caption{(Full version) Our private waterfall matching protocol under $\F{\PRF}$-hybrid model.  $\mathcal{HE}$ is an AHE scheme.\label{fig:proposed_full_wfm}}
\end{figure*}

\subsubsection{Security Proof for $\Prot{\PRF}^{\sf DDH}$}
This section contains the full version of the security proof of $\Prot{\PRF}^{\sf DDH}$. 

Let $\VIEW_{\sigma, \lambda}^{\Prot{\PRF}^{\sf DDH}}(\{x_i\}_{i=1}^{n}, \emptyset)$
be a random variable representing the view of $P_\sigma$ ($\sigma = A, B$) in real protocol execution, 
where the random variable ranges over the internal randomness of all parties, and the randomness in the setup phase (including that of the Random Oracle).
The the view of a party consists of its internal state (including its input and randomness) and all messages this party received from the other party.
The messages sent by this party do not need to be part of the view because they can be determined using the other elements of its view.
\begin{align*}
&\VIEW_{A, \lambda}^{\Prot{\PRF}^{\sf DH}}(\{x_i\}_{i=1}^{n}) = (\emptyset, \{ \hash(x_j)^{k'_B} \}_{j \in Z}) \\
&\VIEW_{B, \lambda}^{\Prot{\PRF}^{\sf DH}}(\{x_i\}_{i=1}^{n}) = (\{\hash(x_i)^{k_A}\}_{i=1}^{n}, \{ \hash(x_j)^{k'_A} \}_{j \in Z}) 
\end{align*}

We now show that \PA{}'s view can be simulated given only \PA{}'s input but not \PB{}'s input. The simulator $\SIM_A$ is given in Figure~\ref{fig:sim_A}.  
We argue that $\{\hash(x_j)^{k'_B} \}_{j \in Z}$ is indistinguishable from random group elements in $\GG$ assuming the hardness of DDH under the random oracle model.
Specifically, the existence of an efficient distinguisher $\mathcal{D}$ that outputs 0 when presented with $\tilde{b}_1, \cdots, \tilde{b}_{n'}$ and outputs 1 
when it observes $\{\hash(x_j)^{k'_B}\}_{j \in Z}$ allows us to construct a simulator $\SIM$ that violates the DDH assumption, as follows.

Upon receiving a DDH challenge $(g, g^x, g^y, g^z)$, $\SIM$ does the following: 
\begin{itemize}
    \item Selects random values $d_1, \cdots, d_{n'-2}$ from $\ZZ_q$.
    \item Answers queries for $\hash$ as follows: $\hash(x_{j_1}) = g, \hash(x_{j_2}) = g^x$, and $\hash(x_{j_k}) = g^{d_{k-2}}$ for $k > 2$ for consistency.
    \item Sends $\{g^y, g^z, (g^y)d_1, \cdots, (g^y)^{d_{n'-2}} \}$ to $\mathcal{D}$.
\end{itemize}
If $(g, g^x, g^y, g^z)$ is a Diffie-Hellman tuple, i.e., $z = xy$, then 
\begin{align*}
  &\{g^y, g^z, (g^y)^{d_1}, \cdots, (g^y)^{d_{n'-2}}\} \\
= &\{g^y, (g^x)^y, (g^{d_1})^y, \cdots, (g^{d_{n'-2}})^y\},
\end{align*}
which is distributed like $\{\hash(x_{j_1})^{k'_B}, \cdots, \hash(x_{j_{n'}})^{k'_B}\}$. 
Thus, $\mathcal{D}$ must return 1.
On the other hand, when $z \neq xy$, the term $g^z \neq \hash(x_{j_2})^y$. As a result, $\SIM$ can use $\mathcal{D}$'s output to respond to the DDH challenge correctly.

The simulator $\SIM_B$ is given in Figure~\ref{fig:sim_B}.  
For the evaluation part, the values in $\{\hash(x_i)^{k_A}\}$ are indistinguishable from random group elements under the DDH assumption.
Similarly, for the update part, the values in $\{\hash(x_j)^{k'_A}\}$ are also indistinguishable from random group elements in $\GG$.

\section{Full Protocol with More Identifiers}
Figure~\ref{fig:proposed_full_wfm} depicts the full version of our waterfall matching protocol that handles $m \ge 2$ identifiers.

\noindent{\bf Complexities.} 
Let the databases sizes being ${n}_A = |{\ID}^b_A|$ and ${n}_B = |\ID^b_B|$, and $|{J}^{b}|$ being the intersection size on the $b$-th identifier.
When instantiating $\F{\PRF}$ as the DDH-based construction, 
the computation of Figure~\ref{fig:proposed_full_wfm} includes $O(2m({n}_A + {n}_A) - 2\sum_{b=1}^{m-1} |{J}^{b}|)$ group operations, and $O({n}_B + \sum_{b=1}^{m}|J^b|)$ AHE operations.
The communication includes $O(m({n}_A + {n}_B) + {n}_B - \sum_{b=1}^{m-1}|{J}^{b}|)$ group elements, and $O({n}_B)$ AHE ciphertexts.

\end{document}